\def\dOi{9(3:10)2013}
\DeclareMathOperator{\dom}{dom}
\newcommand{\bigcat}[1]{\ensuremath{\mathsf{#1}}}
\newcommand{\commentthis}[1]{}
\newcommand{\C}{\mathcal{C}}
\newcommand{\D}{\mathcal{D}}
\newcommand{\F}{\mathcal{F}}
\newcommand{\Fi}{\mathcal{F}_1}
\renewcommand{\L}{\mathcal{L}}
\renewcommand{\H}{\mathcal{H}}
\newcommand{\Hi}{\mathcal{H}_1}
\newcommand{\J}{\mathcal{J}}
\newcommand{\K}{\mathcal{K}}
\newcommand{\Ki}{\mathcal{K}_1}
\newcommand{\M}{\mathcal{M}}
\newcommand{\U}{\mathcal{U}}
\newcommand{\Ui}{\mathcal{U}_1}
\newcommand{\V}{\mathcal{V}}
\newcommand{\Vi}{\mathcal{V}_1}
\newcommand{\W}{\mathcal{W}}
\newcommand{\Wi}{\W_1}
\newcommand{\X}{\mathcal{X}}
\newcommand{\R}{\mathcal{R}}
\renewcommand{\S}{\mathcal{S}}
\newcommand{\StdSigi}{\bigcat{Sig}_{1}} 
\newcommand{\StdSig}{\bigcat{Sig}}
\newcommand{\Set}{\bigcat{Set}}
\newcommand{\cccat}{\bigcat{CCCat}}
\newcommand{\tcccat}{\bigcat{2CCCat}}
\newcommand{\tccc}{cartesian closed 2-category}
\newcommand{\tcccs}{cartesian closed 2-categories}
\newcommand{\abs}{\ell}
\newcommand{\app}{a}
\newcommand{\para}{_{||}}
\newcommand{\transl}[1]{\llbracket #1 \rrbracket}
\newcommand{\rond}{\circ}
\newcommand{\id}{\mathit{id}}
\newcommand{\iso}{\cong}
\newcommand{\card}[1]{|#1|}
\newcommand{\ens}[1]{\{ #1 \}}
\newcommand{\alt}{\mathrel{|}}
\newcommand{\lam}{\ensuremath{\lambda}}
\newcommand{\subs}[1]{[#1]}
\newcommand{\ev}{\mathit{ev}}
\newcommand{\Gam}{\Gamma}
\newcommand{\Del}{\Delta}
\newcommand{\rcolon}{\mathrel{:}}
\newcommand{\red}{\mathrel{\to}}
\newcommand{\req}{\mathrel{\equiv}}
\newcommand{\vcomp}[1]{\mathbin{;_{#1}}}
\newcommand{\proj}{\pi}
\newcommand{\projj}{\pi'}
\newcommand{\Lo}{\ensuremath{\L_0}}
\newcommand{\So}{\ensuremath{\S_0}}
\newcommand{\Li}{\ensuremath{\L_1}}
\newcommand{\ruleset}[2]{
{\ } \\
\mbox{#1} \\ \hrulefill
  \begin{mathpar}
    #2
  \end{mathpar}
}
\newcommand{\bureaucratic}[1]{}
\newcommand{\abar}{\overline{a}}
\newcommand{\LAlg}{\L\mbox{-}\mathrm{Alg}}
\newcommand{\LiAlg}{\L_1\mbox{-}\mathrm{Alg}}
\numberwithin{equation}{section} \numberwithin{paragraph}{section}
\newcommand{\constapp}[2]{#1 \llparenthesis #2 \rrparenthesis}
\newcommand{\ruleapp}[2]{#1 \llangle #2 \rrangle}
\newcommand{\pairing}[1]{\langle #1 \rangle}
\newcommand{\unitadj}{\eta^\L}
\def\framed{%
\setbox0=\vbox\bgroup%
\advance\hsize by -2\fboxsep\advance\hsize by -2\fboxrule%
\linewidth=\hsize%
}
\def\endframed{%
\egroup\noindent\framebox[\textwidth]{\box0}\vspace*{1mm}}
\begin{document}

\title[Cartesian closed 2-categories and permutation
equivalence]{Cartesian closed 2-categories and permutation equivalence
  in higher-order rewriting}

\author[T.~Hirschowitz]{Tom Hirschowitz} 
\address{CNRS, Université de Savoie} 
\email{tom.hirschowitz@univ-savoie.fr}
\thanks{Partially
    funded by the French ANR \emph{projets blancs} PiCoq ANR-10-BLAN-0305 and
    R\'ecr\'e ANR-11-BS02-0010.}


\keywords{Cartesian closed 2-categories, lambda calculus, higher-order rewriting, combinatory reduction systems, categorical semantics}

%
\ACMCCS{[{\bf Theory of computation}]: Semantics and reasoning---Program semantics---Categorical semantics / Denotational semantics / Operational semantics}


\begin{abstract}
  We propose a semantics for permutation equivalence in higher-order
  rewriting.  This semantics takes place in cartesian closed
  2-categories, and is proved sound and complete.
\end{abstract}

\maketitle

\section{Introduction} Cartesian closed categories provide semantics
for equational theories with variable
binding~\cite{Lambek:categorical,CROct}.  On the other hand,
2-categories with finite products provide semantics for term
rewriting~\cite{DBLP:conf/rta/CorradiniGM95}.  The present paper shows
that \emph{cartesian closed 2-categories} provide semantics for term
rewriting with variable binding, as embodied by Brugginks's
generalisation~\cite{bruggink:phd2008} of \emph{permutation
  equivalence}~\cite[Chapter 8]{terese} to \emph{higher-order
  rewriting}~\cite{KlopCRS,Wolfram,DBLP:conf/lics/Nipkow91,DBLP:conf/hoa/OostromR93}.

We first define \emph{cartesian closed 2-signatures}, which generalise
higher-order rewrite systems, and organise them into a category
$\StdSig$.  We then construct an adjunction
\begin{equation}
  \adj{\StdSig}{\tcccat,}{\H}{\W}
  \label{adj:main}
\end{equation}
where $\tcccat$ is the category of small cartesian closed
2-categories.  From a given cartesian closed 2-signature $S$, the
functor $\H$ constructs a \tccc{}, whose 2-cells are Bruggink's proof
terms modulo permutation equivalence, which we prove is the free
\tccc{} generated by $S$.

We review a number of examples and non-examples, and sketch an
extension to deal with the latter.

\subsection*{Related work}
Our cartesian closed 2-signatures may be seen as a 2-dimensional refinement of
\emph{cartesian closed
  sketches}~\cite{DBLP:journals/tcs/Wells90,Despeyroux,KPT:sketches}.
Bruggink's calculus of permutation equivalence is close in spirit to
Hilken's 2-categorical semantics of the simply-typed
$\lam$-calculus~\cite{DBLP:journals/tcs/Hilken96}, but technically
different and generalised to arbitrary higher-order rewrite systems.
Capriotti~\cite{Capriotti} proposes a semantics of so-called
\emph{flat} permutation equivalence in sesquicategories.  More related
work is discussed in Section~\ref{subsec:binding}.

\section{Cartesian closed signatures and categories}
\label{sec:ccc}
We start by recalling the well-known
adjunction~\cite{Lambek:categorical,CROct} between what we here call
\emph{(cartesian closed) 1-signatures} and cartesian closed
categories.

For any set $X$,
define \emph{types} over $X$ by the grammar:
$$
\begin{array}[c]{rcll}
  A, B, \ldots \in \Lo (X) & ::= & x \alt 1 \alt A \times B \alt B^A,
\end{array}
$$
with $x \in X$.

\begin{prop}
  $\Lo$ defines a monad on $\Set$.
\end{prop}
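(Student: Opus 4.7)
The plan is to observe that $\Lo$ is precisely the free algebra monad for the algebraic signature with one nullary operation $1$, one binary operation $\times$, and one binary operation $-^{-}$, and to make this explicit by defining the functorial action, unit and multiplication, then verifying the standard monad equations by structural induction on types.

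First I would define the functorial action: given $f\colon X \to Y$, specify $\Lo(f)\colon \Lo(X)\to \Lo(Y)$ by structural recursion, setting $\Lo(f)(x) = f(x)$, $\Lo(f)(1) = 1$, $\Lo(f)(A\times B) = \Lo(f)(A)\times \Lo(f)(B)$, and $\Lo(f)(B^A) = \Lo(f)(B)^{\Lo(f)(A)}$. A straightforward induction on the type $A \in \Lo(X)$ shows $\Lo(\id_X)(A) = A$ and $\Lo(g\circ f)(A) = \Lo(g)(\Lo(f)(A))$, so $\Lo$ is a functor.

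Next I would define the unit $\eta_X \colon X \to \Lo(X)$ as the variable-inclusion $\eta_X(x) = x$, which is natural on the nose by the definition of $\Lo(f)$ on variables. For the multiplication $\mu_X \colon \Lo(\Lo(X)) \to \Lo(X)$, I would use structural recursion on an element $T\in \Lo(\Lo(X))$: a variable of $\Lo(\Lo(X))$ is already a type $A\in \Lo(X)$, and we set $\mu_X(A) = A$; we then set $\mu_X(1) = 1$, $\mu_X(T\times U) = \mu_X(T)\times \mu_X(U)$, and $\mu_X(U^T) = \mu_X(U)^{\mu_X(T)}$. Naturality of $\mu$ follows by induction on $T$.

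Finally, the three monad laws $\mu\circ \Lo\eta = \id$, $\mu\circ \eta\Lo = \id$, and $\mu\circ \Lo\mu = \mu\circ \mu\Lo$ are each verified by a routine structural induction on a type (in $\Lo(X)$, $\Lo(X)$, and $\Lo^3(X)$ respectively), the variable case being immediate from the definitions and the constructor cases reducing to the inductive hypotheses. There is no real obstacle here; the only thing worth being careful about is that the ``variables'' in $\Lo(\Lo(X))$ are entire types over $X$, so that the base case of the recursion defining $\mu_X$ and of the inductive verifications is exactly where the unit and multiplication interact, and this is precisely what makes the unit laws hold definitionally.
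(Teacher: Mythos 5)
Your proof is correct and is exactly the standard argument the paper implicitly relies on: the paper states this proposition without proof, treating $\Lo$ as the evident free-algebra (term) monad for the signature $\{1, \times, (-)^{(-)}\}$, with unit the variable inclusion and multiplication the flattening of types-over-types. Your explicit structural recursions and inductions fill in precisely what the paper leaves routine, and the observation that the unit laws hold essentially definitionally at the variable case is accurate.
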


Let the set of \emph{sequents} over a set $X$ be $\So (X) = \Lo
(X)^* \times \Lo (X)$, i.e., sequents are pairs of a list of types and
a type. The assignment $X \mapsto \So (X)$ extends to an endofunctor
on $\Set$.

\begin{defi}
  A \emph{1-signature} consists of a set $X_0$ of \emph{sorts}, and an
  \linebreak $\So (X_0)$-indexed set $X_1$ of \emph{operations}, or
  equivalently a map $X_1 \to \So (X_0)$.
\end{defi}
A \emph{morphism of 1-signatures} $(X_0, X_1) \to (Y_0, Y_1)$ is a
pair $(f_0, f_1)$ where $f_i \colon X_i \to Y_i$ such that
  \begin{center}
    \diag (1,3) {
      X_1 \& Y_1 \\
      \So(X_0) \&       \So(Y_0) 
    }{ \sq{f_1}{}{}{\So (f_0)} %
    }
  \end{center}
  commutes.  Morphisms compose in the obvious way, and we have:
  \begin{prop}
    Composition of morphisms is associative and unital, and hence
    1-signatures and their morphisms form a category $\StdSigi$.
  \end{prop}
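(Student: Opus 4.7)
The plan is to verify the three pieces that together establish categoricity: that the naive componentwise composition of two morphisms is itself a morphism of 1-signatures, that this composition is associative, and that componentwise identity functions serve as units.

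First I would check well-definedness of composition. Given morphisms $(f_0, f_1) \colon (X_0, X_1) \to (Y_0, Y_1)$ and $(g_0, g_1) \colon (Y_0, Y_1) \to (Z_0, Z_1)$, set their composite to be $(g_0 \circ f_0, g_1 \circ f_1)$. To see this is again a morphism of 1-signatures, I would paste the two defining commuting squares vertically, obtaining an outer rectangle whose right-hand vertical is $\So(g_0) \circ \So(f_0)$; by functoriality of $\So$ on $\Set$ (stated just before the definition of 1-signatures), this equals $\So(g_0 \circ f_0)$, so the outer rectangle is exactly the commuting square required of the composite.

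Next I would verify associativity and unitality. For associativity, observe that composition is defined componentwise and that for ordinary set-maps $(h_i \circ g_i) \circ f_i = h_i \circ (g_i \circ f_i)$ for $i = 0, 1$; so associativity in $\StdSigi$ is inherited from $\Set$ pointwise. For units, take $\id_{(X_0, X_1)} = (\id_{X_0}, \id_{X_1})$: this is a morphism because $\So(\id_{X_0}) = \id_{\So(X_0)}$ (again by functoriality of $\So$) makes the relevant square commute trivially, and the identity laws $f_i \circ \id = f_i = \id \circ f_i$ in $\Set$ yield identity laws componentwise in $\StdSigi$.

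There is no real obstacle here: the verification is routine, and the sole nontrivial input is the functoriality of $\So$, which is exactly what is needed both to see that composition yields a morphism and that the identity is a morphism. Once these two observations are in place, the category axioms reduce to the category axioms of $\Set$ applied componentwise.
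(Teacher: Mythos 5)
Your proof is correct and matches the paper's (implicit) argument: the paper simply states that morphisms ``compose in the obvious way,'' relying exactly on the functoriality of $\So$ it has just recorded, and your write-up is the routine verification of that claim spelled out. No discrepancy to report.
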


  There is a well-known adjunction 
  \begin{center}
    \adj{\StdSigi}{\cccat}{\Hi}{\Wi}
  \end{center}
  between 1-signatures and the category $\cccat$ of small cartesian
  closed categories (with chosen structure) and (strict) cartesian
  closed functors, i.e., functors $F \colon \C \to \D$ preserving
  binary products, projections, and the terminal object on the nose, and such that,
  for all objects $A,B \in \C$,
  currying 
  $$F (B^A) \times F(A) = F(B^A \times A) \xrightarrow{F(\ev_{A,B})} F(B)$$
  yields an identity.

  The functor $\Wi$ maps any cartesian closed category $\C$ to the
  signature with sorts $\C_0$, its set of objects, and with operations
  $A_1, \ldots, A_n \to A$ the set $\C (\transl{A_1 \times \ldots
    \times A_n}, \transl{A})$, where $\transl{-}$ denotes the function
  $\Lo (\C_0) \to \C_0$ defined by induction:

  \begin{equation}
  \begin{array}{rcll}
    \transl{c} & = & c & \mbox{\ \ \ $c \in \C_0$} \\
    \transl{1} & = & 1 \\
    \transl{A \times B} & = & \transl{A} \times \transl{B} \\
    \transl{B^A} & = & \transl{B} ^ {\transl{A}}.    
  \end{array}\label{eq:transl}
\end{equation}

Conversely, given a 1-signature $X$, consider the simply-typed
$\lam$-calculus with base types in $X_0$ and constants in $X_1$. I.e.,
for any $c \in X_1(G,A)$ and terms $\Gam \vdash M_i \colon A_i$, for
all $1 \leq i \leq n$, where $G = (A_1, \ldots, A_n)$, we have a term
$\Gam \vdash \constapp{c}{M_1, \ldots, M_n} \colon A$, representing
the application of the constant $c$ to $M_1, \ldots, M_n$. We use
special parentheses to avoid ambiguity with term application.  Terms
modulo $\beta\eta$ form a cartesian closed category $\Hi(X)$ with
objects all types over $X_0$ and morphisms $A \to B$ all terms of type
$B$ with one free variable of type $A$.

  A less often formulated observation, which is useful to us, is that
  the adjunction $ \Hi \dashv \Wi$ decomposes into two adjunctions
  \begin{center}
    \adjs{\StdSigi}{\LiAlg}{\Ki}{\Ui}{\cccat,}{\Fi}{\Vi}
  \end{center}
  as follows.

  Consider first the endofunctor $\Li$ on $\StdSigi$ defined on objects
  by mapping any 1-signature $X$ to the 1-signature with
  \begin{iteMize}{$\bullet$}
  \item 
    as sorts the set $X_0$, and
  \item as operations $\Gam \vdash A$ the $\lam$-terms $\Gam \vdash M
    \colon A$, with base types in $X_0$ and constants in $X_1$, as
    sketched above, modulo $\beta\eta$.
  \end{iteMize}
  On morphisms of 1-signatures $X \xrightarrow{f} Y$, let $\Li (f)$
  substitute constants $c \in X_1$ with $f_1 (c)$.  We obtain
  \begin{prop}
    $\Li$ is a monad on $\StdSigi$, with unit and multiplication, say $\eta$ and $\mu$.
  \end{prop}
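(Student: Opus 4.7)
The plan is to define the unit $\eta$ and multiplication $\mu$ explicitly in terms of $\lam$-calculus syntax, and to show that each required condition reduces to a standard property of capture-avoiding substitution modulo $\beta\eta$.

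For each 1-signature $X$, I set $\eta_X \colon X \to \Li X$ to be the identity on sorts; on an operation $c \in X_1(\Gam, A)$ with $\Gam = (A_1, \ldots, A_n)$, $\eta_X$ sends $c$ to the equivalence class of the $\lam$-term $\constapp{c}{x_1, \ldots, x_n}$, where $x_i$ is the $i$-th variable of $\Gam$. The multiplication $\mu_X \colon \Li \Li X \to \Li X$ is also the identity on sorts. On an operation $M$ of $\Li \Li X$ (a $\lam$-term over $X$ whose constants are themselves equivalence classes of $\lam$-terms over $X$), $\mu_X(M)$ is defined by induction on $M$: it commutes with application, $\lam$-abstraction, pairing and projections, and replaces each subterm of the form $\constapp{N}{M_1, \ldots, M_k}$, with $N \in \Li(X)_1((B_1, \ldots, B_k), B)$, by the capture-avoiding substitution $N[\mu_X(M_1)/x_1, \ldots, \mu_X(M_k)/x_k]$.

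The first subtlety is to verify that $\mu_X$ is well-defined on $\beta\eta$-equivalence classes at both the outer and inner levels. This reduces to two standard facts about the simply-typed $\lam$-calculus: $\beta\eta$-equivalence is preserved by substitution, and the result class is unchanged when a substituted term is replaced by a $\beta\eta$-equivalent one. Naturality of $\eta$ and $\mu$ in $X$ is then immediate, since $\Li(f)$, $\eta$, and $\mu$ are all defined by operations that commute with renaming of constants.

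It remains to verify the monad axioms. For the left unit, $\mu_X \circ \eta_{\Li X}$ sends an operation $M \in \Li(X)_1(\Gam, A)$ to $\mu_X(\constapp{M}{x_1, \ldots, x_n}) = M[x_1/x_1, \ldots, x_n/x_n] = M$. For the right unit, $\mu_X \circ \Li(\eta_X)$ sends $M$ to the term obtained by replacing each subterm $\constapp{c}{M_1, \ldots, M_n}$ by $\constapp{c}{x_1, \ldots, x_n}[\mu_X(M_1)/x_1, \ldots, \mu_X(M_n)/x_n] = \constapp{c}{\mu_X(M_1), \ldots, \mu_X(M_n)}$, which recovers $M$ by induction. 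Associativity $\mu_X \circ \Li(\mu_X) = \mu_X \circ \mu_{\Li X}$ is the standard substitution lemma applied to a triply-nested $\lam$-term. The main obstacle throughout is purely bookkeeping: one must carefully track the two distinct levels of $\lam$-terms inherent to $\Li \Li X$ and check that each operation respects $\beta\eta$.
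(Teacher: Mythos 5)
Your construction is correct. The paper states this proposition without proof---it is treated as a well-known fact about the simply-typed $\lam$-calculus over a signature---so there is no argument of the paper's to compare against; your unit (sending a constant $c$ to the class of $\constapp{c}{x_1, \ldots, x_n}$) and your multiplication (flattening by capture-avoiding substitution of the arguments into the constant-level terms) are exactly the structure the paper implicitly relies on, for instance when it later defines the unit and multiplication of the 2-dimensional monad $\L$ by the analogous formulas. The one place your sketch is slightly terse is well-definedness of $\mu_X$ on \emph{outer} $\beta\eta$-classes: besides the two facts you cite, you also need that $\mu_X$ commutes with capture-avoiding substitution, so that an outer $\beta$-redex and its reduct are sent to an inner redex and its reduct; but this is the same routine induction you already invoke for associativity, so it does not constitute a gap.
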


  Let now $\LiAlg$ be the category of algebras for the monad $\Li$
  and $\Ki$ be the `free algebra' functor $X \mapsto (\Li (X),
  \mu_X)$.  
  
  The functor $\Vi$ maps any cartesian closed category $\C$ to the
  $\Li$-algebra with base 1-signature $(\C_0, \C_1)$, defined as
  follows. First, $\C_0$ is the set of objects of $\C$. It has a
  canonical $\Lo$-algebra structure, say $h_0 \colon \Lo (\C_0) \to
  \C_0$, obtained by interpreting type constructors in $\C$ as
  in~\eqref{eq:transl}. Extending this to contexts $G$ by $h_0 (G) =
  \prod_{i} h_0 (G_i)$, let the operations in $\C_1 (G, A)$ be the
  morphisms in $\C(h_0(G), h_0(A))$. Beware: the domain and codomain of
  such an operation are really $G$ and $A$, not $h_0 (G)$ and $h_0
  (A)$. Similarly, interpreting the $\lam$-calculus in $\C$, the
  1-signature $(\C_0, \C_1)$ has a canonical $\Li$-algebra structure,
  say $h_1 \colon \Li (\C_0, \C_1) \to (\C_0, \C_1)$:
  $$
  \begin{array}{rcll}
    h_1 (G \vdash x_i \colon G_i) & = & \proj_i \\
    h_1 (G \vdash () \colon 1) & = & ! \\
    h_1 (G \vdash \constapp{c}{M_1, \ldots, M_n} \colon A) & = &
    c \rond \pairing{h_1 (M_1), \ldots, h_1 (M_n)} \\
    h_1 (G \vdash \lam x \colon A. M \colon B^A) & = & %
    \varphi (h_1 (G, x \colon A \vdash M \colon B)) \\
    h_1 (G \vdash M N \colon B) & = & \ev \rond \pairing{h_1(M), h_1(N)} \\
    h_1 (G \vdash (M, N) \colon A \times B) & = & \pairing{h_1(M), h_1(N)} \\
    h_1 (G \vdash \proj M \colon A) & = & \proj \rond h_1(M) \\
    h_1 (G \vdash \projj M \colon A) & = & \projj \rond h_1(M),
  \end{array}
  $$
  where $!$ is the unique morphism $h_0 (G) \to 1$, $\varphi$ is the 
  bijection $\C (h_0 (G, A), h_0(B)) \iso \C(h_0 (G), h_0(B^A))$, and 
  $\ev$ is the structure morphism $h_0(B^A \times A) \to h_0(B)$.

  $\Li$-algebras are much like cartesian closed categories whose
  objects are freely generated by their set of sorts. A perhaps useful
  analogy here is with multicategories $\M$, seen as being close to
  monoidal categories whose objects are freely generated by those of
  $\M$ by tensor and unit. Here, the functor $\Fi$ sends any $\Li$-algebra $(X, h)$
  to the cartesian closed category with
  \begin{iteMize}{$\bullet$}
  \item objects the types over $X_0$, i.e., $\Lo (X_0)$,
  \item morphisms $A \to B$ the set of operations in $X_1 (A, B)$.
  \end{iteMize}
  This canonically forms a cartesian closed category, with structure
  induced by the $\Li$-algebra structure. We define it in more detail
  in dimension 2 in Section~\ref{subsec:left}.
  
\section{Cartesian closed 2-signatures}

Given a 1-signature $X$, let $X\para{}$ denote the set of pairs of
parallel operations, i.e., pairs of operations $M, N$ over the same
sequent. Otherwise said, $X\para{}$ is the pullback
\begin{center}
  \Diag{%
    \pbk{m-2-1}{m-1-1}{m-1-2} %
    }{%
    X\para{} \& X_1 \\
    X_1 \& \So(X_0).
  }{ %
    (m-1-1) edge (m-1-2) %
    (m-1-1) edge (m-2-1) %
    (m-2-1) edge (m-2-2) %
    (m-1-2) edge (m-2-2) %
    }
\end{center}

Any morphism $f \colon X \to Y$ of 1-signatures yields a map $f\para{} \colon X\para{} \to Y\para{}$, via the dashed arrow (obtained by universal property of pullback) in
\begin{center}
  \diagramme[diagorigins={1}{1.4}]{}{%
    \pbk{m-3-1}{m-1-1}{m-1-3} %
    \pbk{m-4-2}{m-2-2}{m-2-4} %
    \path[->] 
    (m-1-1) edge[fore,dashed,->] (m-2-2) %
    ; %
    }{%
      X\para{} \& \& X_1 \&\\
      \& Y\para{} \& \& Y_1 \\
      X_1 \& \& \So(X_0) \\
      \& Y_1 \& \& \So(Y_0).
      }{%
        (m-1-1) edge (m-1-3) %
        (m-1-1) edge (m-3-1) %
        (m-1-3) edge (m-3-3) %
        (m-3-1) edge (m-3-3) %
        (m-2-2) edge[fore,->] (m-2-4) %
        (m-2-2) edge[fore,->] (m-4-2) %
        (m-2-4) edge (m-4-4) %
        (m-4-2) edge (m-4-4) %
        (m-1-3) edge[labelar={f_1}] (m-2-4) %
        (m-3-3) edge[labelbl={\So (f_0)}] (m-4-4) %
        (m-3-1) edge[labelbl={f_1}] (m-4-2) %
        }
\end{center}

\begin{defi}
  A \emph{2-signature} consists of a 1-signature $X$, plus a set $X_2$
  of \emph{reduction rules} with a map $X_2 \to \Li (X)\para{}.$
\end{defi}

A \emph{morphism of  2-signatures} $(X, X_2) \to (Y, Y_2)$ is a pair $(f, f_2)$ where
$f \colon X \to Y$ is a morphism of 1-signatures and $f_2 \colon X_2 \to Y_2$ makes 
the diagram
\begin{center}
  \diag{%
    X_2 \& Y_2 \\
    \Li(X)\para{} \& \Li(Y)\para{}
    }{%
      (m-1-1) edge[labelu={f_2}] (m-1-2) %
      (m-1-1) edge (m-2-1) %
      (m-1-2) edge (m-2-2) %
      (m-2-1) edge[labeld={\Li(f_1)\para{}}] (m-2-2)
    }
\end{center}
commute. We obtain:
\begin{prop}
  Composition of morphisms is associative and unital, and hence
  2-signatures and their morphisms form a category $\StdSig$.
\end{prop}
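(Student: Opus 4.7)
The plan is to mimic the proof of the corresponding proposition for $\StdSigi$, reducing the 2-signature category axioms to those of $\StdSigi$ together with composition of set maps. Explicitly, define composition of morphisms $(f, f_2) \colon (X, X_2) \to (Y, Y_2)$ and $(g, g_2) \colon (Y, Y_2) \to (Z, Z_2)$ by $(g, g_2) \rond (f, f_2) \eqdef (g \rond f, g_2 \rond f_2)$, and define $\id_{(X, X_2)} \eqdef (\id_X, \id_{X_2})$.

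The main thing to check is that the composite and identity satisfy the commuting square condition. This rests on two auxiliary observations, which I would state and verify first. \textbf{(i)} The assignment $X \mapsto X\para{}$, $f \mapsto f\para{}$ is a functor $\StdSigi \to \Set$: since $X\para{}$ is defined as a pullback and $f\para{}$ is the unique map into it induced by the universal property, both functoriality equations $(g \rond f)\para{} = g\para{} \rond f\para{}$ and $(\id_X)\para{} = \id_{X\para{}}$ follow by pullback uniqueness. \textbf{(ii)} Consequently $\Li(\thg)\para{}$ is also a functor, being the composite of $\Li$ (already known to be a functor, in fact a monad, on $\StdSigi$) with $(\thg)\para{}$.

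With (i) and (ii) in hand, the commuting square for the composite $(g \rond f, g_2 \rond f_2)$ arises by pasting the squares for $(f, f_2)$ and $(g, g_2)$ vertically and using $\Li(g_1 \rond f_1)\para{} = \Li(g_1)\para{} \rond \Li(f_1)\para{}$, while the identity square is trivial. Associativity and unitality of the composition then reduce to associativity and unitality of composition in $\StdSigi$ (already established) for the first component, and to the corresponding facts for set maps for the second component, so the pairs assemble into a category $\StdSig$.

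I do not expect any genuine obstacle: the only mildly non-obvious step is the functoriality of $(\thg)\para{}$, which is a standard consequence of the universal property of pullbacks and could even be relegated to a remark.
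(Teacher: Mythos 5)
Your proof is correct and follows exactly the route the paper intends: the paper states this proposition without proof (prefacing it only with ``We obtain:''), and your componentwise definition of composition, the pasting of the two commuting squares, and the reduction of associativity and unitality to $\StdSigi$ and $\Set$ supply precisely the routine verification being elided. The one step worth making explicit, as you do, is the functoriality of $X \mapsto X\para{}$, which the paper itself only establishes at the level of a single morphism $f \mapsto f\para{}$ via the universal property of the pullback.
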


\section{Examples}

\subsection{Higher-order rewrite systems}
The prime example of a 2-signature is that for the pure
$\lam$-calculus: it has a sort $t$ and operations
  \begin{mathpar}
    \app \colon (t, t \vdash t) \and \abs \colon (t^t \vdash t),
  \end{mathpar}
  with a reduction rule $\beta$ over the pair 
  $x \colon t^t, y \colon t \vdash \constapp{\app}{\constapp{\abs}{x}, y}, x (y) \colon t$
  in $\Li (\ens{t}, \ens{\abs, \app})\para{}$.
  Categorically, this will yield a 2-cell
  \begin{center}
    \Diag(.5,1){%
      \path %
      (u) edge[twocell,labelr={\beta}] (d) %
      ; %
    }{%
      \& |(u)| t \times t \& \\
      t^t \times t \& \& t.  }{%
      (m-2-1) edge[labelal={\abs \times t}] (u) %
      (u) edge[labelar={\app}] (m-2-3) %
      (m-2-1) edge[bend right=15,twod={\ev}] (m-2-3) %
    }
  \end{center}

  This is an example of a \emph{higher-order rewrite system} in the
  sense of Nipkow~\cite{DBLP:conf/lics/Nipkow91}.  Nipkow's definition
  is formally different, but his higher-order rewrite systems are in
  bijection with 2-signatures $h \colon X_2 \to \Li (X)\para{}$ such
  that for all rules $r \in X_2$, letting $(\Gam \vdash M, N \colon A)
  = h (r)$:
    \begin{iteMize}{$\bullet$}
    \item $M$ is not a variable,
    \item $A$ is a sort,
    \item each variable occurring in $\Gam$ occurs free in $M$.
    \end{iteMize}
    These restrictions help dealing with decidability problems on
    higher-order rewrite systems, whose extension to our setting we
    leave open.

    Let us now anticipate over Adjunction~\eqref{adj:main} and our
    main results below and state our soundness and completeness
    theorem. Given a higher-order rewrite system $X$, i.e., a
    2-signature satisfying the above conditions, let $\R (X)$ be the
    following locally-preordered 2-category. It has:
    \begin{iteMize}{$\bullet$}
    \item objects are types in $\Lo (X_0)$;
    \item morphisms $A \to B$ are $\lam$-terms in $\Li (X) (A \vdash
      B)$, modulo $\beta\eta$;
    \item given two parallel morphisms $M$ and $N$, there is one
      2-cell $M \to N$ exactly when there is a sequence of reductions
      $M \red^* N$ in the usual sense~\cite{DBLP:conf/lics/Nipkow91}.
    \end{iteMize}

    \begin{prop}
      $\R (X)$ is 2-cartesian closed.
    \end{prop}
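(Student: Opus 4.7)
The plan is to first identify the underlying 1-category of $\R(X)$ with $\Hi(X)$, and then verify that the preorder of reduction is compatible with all the cartesian closed operations. Since $\R(X)$ is locally preordered, being 2-cartesian closed amounts to saying that the product and exponential operations are 2-functors (i.e., monotone in each argument) and that the standard adjunctions upgrade to 2-adjunctions, which in the locally-preordered case reduces to checking that the currying bijection preserves and reflects the reduction preorder.

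First I would verify that the underlying 1-category is cartesian closed: by construction objects are types in $\Lo(X_0)$ and morphisms in $A \vdash B$ are $\lambda$-terms modulo $\beta\eta$, so this is precisely $\Hi(X)$. Binary products $A \times B$, the terminal object $1$, and exponentials $B^A$ are provided by the corresponding type constructors, with projections, pairing, and evaluation given by the usual $\lambda$-terms. Next I would establish the congruence of $\red^*$ with every term-forming operation: if $M \red^* M'$ and $N \red^* N'$, then $\pairing{M, N} \red^* \pairing{M', N'}$, $M \rond N \red^* M' \rond N'$ (via substitution), and $\lam x \colon A. M \red^* \lam x \colon A. M'$. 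This is the standard closure of higher-order rewriting under contexts, and it gives precisely the 2-functoriality of $- \times -$, composition, and $(-)^A$, so products and exponentials lift to 2-functors on $\R(X)$.

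For the 2-cartesian closed structure proper, I would check that the bijection $\varphi \colon \R(X)(G \times A, B) \iso \R(X)(G, B^A)$ sending $M$ to $\lam x\colon A. M$ is monotone and reflects reduction: monotonicity follows from closure of reduction under abstraction, and reflection from closure under application together with the $\beta$-step $(\lam x. M)\, y \red M[y/x]$, which identifies $\varphi^{-1}(\varphi(M))$ with $M$ modulo $\beta\eta$. Likewise, the pairing bijection is monotone in each component and reflects reduction via the projection rules. Thus the cartesian and closed adjunctions are 2-adjunctions in the locally-preordered sense.

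The main obstacle is ensuring all of this is well-defined on $\beta\eta$-equivalence classes of $\lambda$-terms, i.e., that if $M \equiv_{\beta\eta} M'$ and $M \red^* N$, then $M' \red^* N'$ for some $N' \equiv_{\beta\eta} N$. This is the classical compatibility of higher-order rewriting (in Nipkow's sense) with $\beta\eta$, proved via the postponement of $\beta\eta$-steps after rewriting steps on normal forms; it is exactly what makes Nipkow-style higher-order rewriting work modulo $\beta\eta$ on simply-typed terms, and it is what justifies treating morphisms as equivalence classes while keeping 2-cells as reduction sequences. Once this is in hand, the verification of the remaining coherence conditions (e.g.\ that projections and evaluation are preserved strictly on the nose) is routine.
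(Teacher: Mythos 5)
The paper states this proposition without proof, so there is no official argument to compare against; your verification is the natural one and is essentially correct. You correctly reduce the problem to (i) the underlying category being $\Hi(X)$, (ii) $\red^*$ being a precongruence for all term formers and for substitution (giving 2-functoriality of $\times$, composition, and currying), and (iii) the pairing and currying bijections preserving and reflecting the preorder, the reflection direction using $\beta\eta$ on representatives; and you rightly flag well-definedness of $\red^*$ on $\beta\eta$-classes, which in Nipkow's setting is handled by defining rewriting on $\eta$-long $\beta$-normal forms. One simplification you could have invoked explicitly: since $\R(X)$ is locally preordered, all equations between 2-cells (interchange, functoriality of the inverse bijections, naturality) hold automatically, so only the existence statements about 2-cells need checking.
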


    $\R (X)$ and $\H (X)$ have the same objects and morphisms. But
    because our inference rules for forming reductions are the same as
    deduction rules for proving the existence of a reduction in the
    usual sense, we may map any reduction $P \colon M \red N$ to the
    unique reduction $M \red N$ in $\R (X)$. 
    Conversely, any standard reduction step has a proof, which provides 
    a reduction $P$. We have proved:
    \begin{thm}[Soundness and completeness]
      There exists an identity-on-objects, identity-on-morphisms,
      locally full cartesian closed 2-functor $\H (X) \xrightarrow{!}
      \R (X)$.
    \end{thm}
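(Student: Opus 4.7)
The plan is to use the universal property provided by Adjunction~\eqref{adj:main}: to give a cartesian closed 2-functor $\H(X) \to \R(X)$ it suffices to give a morphism of 2-signatures $X \to \W(\R(X))$. Since the preceding proposition asserts that $\R(X)$ is cartesian closed, this tactic applies.

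First I would construct the required morphism of 2-signatures. On sorts and operations the map is the identity: the sorts of $\W(\R(X))$ include $X_0$, and the operations include all $\lambda$-terms modulo $\beta\eta$, which in particular contain $X_1$. On the reduction rules, each $r \in X_2$ lying over $(\Gam \vdash M, N \colon A)$ is sent to the unique 2-cell $M \to N$ in $\R(X)$, which exists since $r$ itself provides a one-step reduction $M \red N$ in Nipkow's sense. The necessary square commutes because both legs are identities on the underlying operations. Applying the left adjoint $\H$ and the counit then yields a cartesian closed 2-functor $! \colon \H(X) \to \R(X)$, and it is identity-on-objects and identity-on-morphisms by construction.

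It remains to verify local fullness. Given parallel morphisms $M, N$ and a 2-cell $M \to N$ in $\R(X)$, which by definition means the existence of a reduction sequence $M \red^* N$ in the standard sense, I must exhibit some 2-cell in $\H(X)$ mapping to it; because the hom-preorders of $\R(X)$ collapse parallel 2-cells, any choice will do. I would proceed by induction on the length of the reduction sequence: the base case is reflexivity, handled by the identity 2-cell in $\H(X)$; the inductive step uses vertical composition in $\H(X)$, and so reduces to exhibiting, for every single-step reduction $M \red N$, a 2-cell $M \Rightarrow N$ in $\H(X)$.

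The main obstacle is this last point: a one-step reduction in Nipkow's sense rewrites a subterm using an instance of a rule $r \in X_2$ obtained by substitution under some binding context. I would handle it by decomposing the step into three layers mirroring the 2-categorical structure of $\H(X)$: the rule $r$ itself, which provides a generating 2-cell; closure under substitution of $\lambda$-terms for the free variables of $r$, which is provided by the free $\Li$-algebra structure used in the definition of $\H$ (Section~\ref{subsec:left}); and closure under placement inside a surrounding context, which is provided by horizontal composition and whiskering with the cartesian closed structure. Assembling these layers inductively on the shape of the rewritten term produces a 2-cell in $\H(X)$ whose image under $!$ is the required one-step reduction in $\R(X)$, completing the argument.
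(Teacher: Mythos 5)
Your proposal is correct and follows essentially the same route as the paper: the paper likewise relies on $\R(X)$ being cartesian closed and locally preordered, sends each proof term to the unique 2-cell between its endpoints (soundness), and gets local fullness from the observation that every standard rewrite step --- rule instance, substituted and placed in a surrounding context --- is derivable from the reduction-forming rules (completeness). The only cosmetic difference is that you obtain the 2-functor by transposing a 2-signature morphism $X \to \W(\R(X))$ across the adjunction, whereas the paper defines it directly on proof terms; both hinge on the same two observations.
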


\subsection{Theories with binding}\label{subsec:binding}
Understanding reduction rules as equations, it is easy to define the
free cartesian closed category generated by a 2-signature. This yields
the adjunction 
\begin{equation}
  \adj{\StdSig}{\cccat}{\H'}{\W'}
  \label{eq:adj2}
\end{equation}
recalled above.

This adjunction provides a categorical semantics for theories with
binding, which is more general than other approaches by Fiore and
Hur~\cite{DBLP:conf/lics/Fiore08}, Hirschowitz and
Maggesi~\cite{hirscho:lam}, and Zsidó~\cite{Zsido}.

If I understand correctly, the motivation for Fiore and Hur's subtle
approach is the will to explain the $\lam$-calculus by strictly less
than itself. The present framework does not obey this specification,
and instead tends to view the $\lam$-calculus as a universal
(parameterised) theory with binding. 

We end this section by giving a formal construction of
Adjunction~\eqref{eq:adj2}. Cartesian closed categories form a full,
reflective subcategory of $\tcccat$, via the functor $\J \colon
\tcccat \to \cccat$ mapping any cartesian closed 2-category $\C$ to the
cartesian closed category with:
\begin{iteMize}{$\bullet$}
\item objects those of $\C$,
\item morphisms those of $\C$, modulo the congruence generated by $f
  \sim g$ iff there exists a 2-cell $f \to g$.
\end{iteMize}
Here, $\J (\C)$ is thought of as the free locally discrete cartesian
closed 2-category.  Adjunction~\eqref{eq:adj2} is obtained by
composing the adjunctions
\begin{center}
  \adjs{\StdSig}{\tcccat}{\H}{\W}{\cccat.}{\J}{}
\end{center}

\subsection{Non-examples}
Non-examples are given by calculi whose reduction semantics is defined
on terms modulo a so-called \emph{structural congruence}, e.g.,
CCS~\cite{Milner80}, or the
$\pi$-calculus~\cite{Engberg86acalculus,Milner:pi}.
  
  For example, consider the CCS term $(a \alt 0) \alt \abar$.  In CCS, it is
  \emph{structurally} equivalent to $(a \alt \abar) \alt 0$, which then
  reduces to $0 \alt 0$.

  In order to account for this, we would have to consider a
  2-signature with reduction rules for structural congruence, here
  $(M_1 \alt M_2) \alt M_3 \red M_1 \alt (M_2 \alt M_3)$ for
  associativity, and $M \alt N \red N \alt M$ for commutativity. But
  then, these reductions count as proper reductions, which departs
  from the desired computational behaviour. For example, the term $a
  \alt a$ has an infinite reduction sequence, using commutativity.

  Anticipating the development in the next sections, a potential
  solution is to extend 2-signatures to \emph{2-theories}. For any
  2-signature $X$, let $X\para{}$ denote the set of pairs of reduction
  rules $r, s$ with a common type $G \vdash M \red N \colon A$.  A
  2-theory is a 2-signature $X$, together with a set of equations
  between parallel reductions, i.e., a subset $X_3$ of $\L(X)\para{}$
  (where $\L$ is defined in Section~\ref{sec:L}).

  Another possibility would be to define 2-theories to constist of a
  pair of sets over $\Li (X)\para{}$, one for structural equations,
  and the other for proper reduction rules.

  The main adjunction announced above~\eqref{adj:main} extends to an
  adjunction between 2-theories and cartesian closed 2-categories.
  Using equations, we may specify that any reduction $M \red M$ using
  only structural rules be the identity on $M$, and consider the
  computational behaviour of a 2-category to consist of its
  non-invertible 2-cells, as proposed by
  Hilken~\cite{DBLP:journals/tcs/Hilken96}.  A question is whether for
  a given calculus this can be done with finitely many equations.

\section{A 2-lambda-calculus}\label{sec:L}
We now begin the construction of Adjunction~\eqref{adj:main}.
We start in this section by defining a monad $\L$ on $\StdSig$, which 
we will use to factor Adjunction~\eqref{adj:main} as
\begin{center}
  \adjs{\StdSig}{\LAlg}{\K}{\U}{\tcccat{},}{\F}{\V}
\end{center}
where:
\begin{iteMize}{$\bullet$}
\item $\LAlg$ is the category of $\L$-algebras,
\item $\K \colon \StdSig \to \LAlg$ maps any $X$ to the free $\L$-algebra
  $(\L^2X \xrightarrow{\mu} \L X)$,
\item $\U (\L X \xrightarrow{h} X) = X$,
\item $\tcccat{}$ is the category of cartesian closed 2-categories,
  which we define in Section~\ref{sec:tcccat}.
\end{iteMize}

The left-hand adjunction holds by $\L$ being a monad, thus we
concentrate in Section~\ref{sec:main} on establishing the right-hand
one.

But for now, let us define the monad $\L$. 
\subsection{Syntax}
Given a 2-signature $X =
((X_0, X_1), a \colon X_2 \to \Li (X)\para{})$ (actually
$\Li (X)$ is $\Li (X_0, X_1)$), we construct a new 2-signature $\L
(X)$, whose reduction rules represent reduction sequences in the
``higher-order rewrite system'' defined by $X$, modulo permutation
equivalence.  The 2-signature $\L (X)$ has the same base 1-signature
$(X_0, X_1)$, and as reduction rules the terms of a 2$\lam$-calculus
(in the sense of Hilken~\cite{DBLP:journals/tcs/Hilken96}) modulo
permutation equivalence, which we now define.

\begin{figure}[t]
  \centering
  \begin{framed}
    \begin{mathpar}
      \inferrule{
        \ldots \\ \Gam \vdash P_i \rcolon M_i \red N_i \colon G_i \\ \ldots \\
      }{ \Gam \vdash \ruleapp{r}{P_1, \ldots, P_n} \rcolon M \subs{M_1, \ldots,
          M_n} \red N \subs{N_1, \ldots, N_n} \colon A }~ (r \in X (G
      \vdash M, N \colon A)) \and
      \inferrule{ %
        \Gam \vdash P \rcolon M_1 \red M_2 \colon A \\
        \Gam \vdash Q \rcolon M_2 \red M_3 \colon A }{%
        \Gam \vdash P \vcomp{M_2} Q \rcolon M_1 \red M_3 \colon A }%
      \and %
      \\
      \inferrule{}{ \Gam, x \colon A, \Del \vdash x \rcolon x \red x
        \colon A } \and %
      \inferrule{} {\Gam \vdash () \rcolon () \red () \colon 1} \and %
      \inferrule{
        \Gam \vdash P_1 \rcolon M_1 \red N_1 \colon G_1 \\ \ldots \\
        \Gam \vdash P_n \rcolon M_n \red N_n \colon G_n \\
      }{ \Gam \vdash \constapp{c}{P_1, \ldots, P_n} \rcolon \constapp{c}{M_1, \ldots, M_n}
        \red \constapp{c}{N_1, \ldots, N_n} \colon A}~(c \in X_1 (G \vdash A)) %
      \and
      \inferrule{ \Gam, x \colon A \vdash P \rcolon M \red N \colon B
      }{ \Gam \vdash \lam x \colon A . P \rcolon \lam x \colon A. M
        \red \lam x \colon A. N \colon B^A }
      \and %
      \inferrule{ %
        \Gam \vdash P \rcolon M \red M' \colon B^A \\ %
        \Gam \vdash Q \rcolon N \red N' \colon A %
      }{ %
        \Gam \vdash P Q \rcolon M N \red M' N' \colon B } %
      \and %
      \inferrule{%
        \Gam \vdash P \rcolon M \red M' \colon A \\ %
        \Gam \vdash Q \rcolon N \red N' \colon B %
      }{ %
        \Gam \vdash (P, Q) \rcolon (M, N) \red (M', N') \colon A
        \times B } %
      \and \\%
      \inferrule{ %
        \Gam \vdash P \rcolon M \red N \colon A \times B %
      }{ %
        \Gam \vdash \proj_{A,B} P \rcolon \proj_{A,B} M \red
        \proj_{A,B} N \colon A %
      } %
      \and %
      \inferrule{ %
        \Gam \vdash P \rcolon M \red N \colon A \times B %
      }{ %
        \Gam \vdash \projj_{A,B} P \rcolon \projj_{A,B} M \red
        \projj_{A,B} N \colon B %
      } %
    \end{mathpar}

  \end{framed}
  \caption{Reductions}
  \label{fig:reductions}
\end{figure}

First, terms, called \emph{reductions}, are defined by induction in
Figure~\ref{fig:reductions}. The typing judgement has the shape $\Gam
\vdash P \rcolon M \red N \colon A$, where $A$ is a type in $\Lo
(X_0)$, $\Gam$ is a list of pairs of a variable and a type, with no
variable appearing more than once, $M$ and $N$ are terms of type $\Gam
\vdash A$ modulo $\beta\eta$, and $P$ is a reduction. In the sequel,
we often forget the variables in such pairs $(\Gam \vdash A)$, and
identify them with sequents in $\So (X_0)$.

\begin{rem}
  For any $\Gam \vdash M \colon A$, we have a reduction $\Gam \vdash M
  \rcolon M \red M \colon A$.
\end{rem}

When clear from context, we abbreviate substitutions $\subs{M_1 / x_1,
  \ldots, M_n / x_n}$ of terms by $\subs{M_1, \ldots, M_n}$. For a
context $G$, $G_i$ denotes its $i$th type. Also, for $(M, N) \in \Li
(X)\para{}$, we let $X (M, N)$ be the set of all reduction rules $r
\in X_2$ such that $a (r) = (M, N)$. We write $X (\Gam \vdash M, N
\colon A)$ to indicate the common type of $M$ and $N$. Similarly, $X
(G \vdash A)$ denotes the set of operations in $X_1$ above $G \vdash
A$.

\subsection{Substitution}
Next, we define substitution, which has ``type''
\begin{equation}
  \inferrule{
    \Gam \vdash Q \rcolon N \red N' \colon \Del \\
    \Del \vdash P \rcolon M \red M' \colon A
    }{
      \Gam \vdash P[Q] \rcolon M[N] \red M'[N'] \colon A,
    }\label{eq:typesubst}
\end{equation}
i.e., given a reduction $P$ and a tuple of reductions $Q$, it produces
a reduction of the indicated type, which we denote by $P[Q]$.  Here, we
denote by $\Gam \vdash Q \rcolon N \red N' \colon \Del$ a tuple of
reductions $\Gam \vdash Q_i \rcolon N_i \red N'_i \colon \Del_i$, for
$1 \leq i \leq \card{\Del}$.

First, observe that we have a form of weakening: for any reduction
$\Gam \vdash P \rcolon M \red N \colon A$ and $x \notin \Gam$, we also
have $\Gam, x \colon B \vdash P \rcolon M \red N \colon A$.
We use this implicitly in the following.

The definition of substitution is a bit tricky:
\begin{iteMize}{$\bullet$}
\item first we define \emph{left whiskering}, which has ``type''
\begin{mathpar}
  \inferrule{
    \Gam \vdash Q \rcolon N \red N' \colon \Del \\
    \Del \vdash M \colon A
    }{
      \Gam \vdash M[Q] \rcolon M[N] \red M'[N'] \colon A;
    }
\end{mathpar}
\item then we define \emph{right whiskering}, which has ``type''
\begin{mathpar}
  \inferrule{
    \Gam \vdash  N \colon \Del \\
    \Del \vdash P \rcolon M \red M' \colon A
    }{
      \Gam \vdash P[N] \rcolon M[N] \red M'[N] \colon A,
    }
\end{mathpar}
(where $N$ denotes a tuple);
\item then we define substitution by $P[Q] = (P[N] \vcomp{M'[N]} M'[Q]).$
\end{iteMize}
There is of course another legitimate definition, namely
$M[Q] \vcomp{M[N']} P[N'].$
The two will be equated by permutation equivalence in the next
section.


Left whiskering is defined inductively, with $\Del = (x_1 \colon A_1,
\ldots, x_n \colon A_n)$ and $Q = (Q_1, \ldots, Q_n)$, by:
\[\begin{array}[t]{rcll}
  ()[Q] & = & () \\
  x_i[Q] & = & Q_i \\
  \constapp{c}{M_1, \ldots, M_p}[Q] & = & \constapp{c}{M_1[Q], \ldots, M_p[Q]} \\
  (\lam x \colon B. M)[Q] & = & \lam x \colon B. (M[Q, x]) \mbox{\ \ (for $x \notin \dom (\Del)$)} \\
  (M N)[Q] & = & (M[Q] N[Q]) \\
    (M, N)[Q] & = & (M[Q], N[Q]) \\
    (\pi_{A,B} M)[Q] & = & \pi_{A,B} (M[Q]) \\
    (\pi'_{A,B} M)[Q] & = & \pi'_{A,B} (M[Q])
\end{array}\]
Right whiskering is defined inductively, with $\Del = (x_1 \colon A_1, \ldots, x_n \colon A_n)$ and
$N = (N_1, \ldots, N_n)$, by:
\[\begin{array}[t]{rcll}
  (\ruleapp{r}{P_1, \ldots, P_p})[N] & = & \ruleapp{r}{P_1[N], \ldots, P_p[N]} \\
  (P_1 \vcomp{M''} P_2)[N] & = & (P_1[N] \vcomp{M''[N]} P_2[N]) \\
  ()[N] & = & () \\
  x_i[N] & = & N_i \\
  \constapp{c}{P_1, \ldots, P_p}[N] & = & \constapp{c}{P_1[N], \ldots, P_p[N]} \\
  (\lam x \colon B. P')[N] & = & \lam x \colon B. (P'[N, x]) \mbox{\ \ (for $x \notin \dom (\Del)$)} \\
  (P_1 P_2)[N] & = & (P_1[N] P_2[N]) \\
    (P_1, P_2)[N] & = & (P_1[N], P_2[N]) \\
    (\pi_{A,B} P')[N] & = & \pi_{A,B} (P'[N]) \\
    (\pi'_{A,B} P')[N] & = & \pi'_{A,B} (P'[N]).
\end{array}\]


\begin{defi}
  Let $P[Q] = (P[N] \vcomp{M'[N]} M'[Q])$.
\end{defi}
\begin{prop}
  Given reductions $P$ and $Q$ as above, $P[Q]$ is a well-typed
  reduction $\Gam \vdash P[Q] \rcolon M[N] \red M'[N'] \colon A$.
\end{prop}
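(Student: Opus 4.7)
The statement just says that the two-stage definition $P[Q] = (P[N] \vcomp{M'[N]} M'[Q])$ returns a reduction of the advertised type. Since substitution is defined via left and right whiskering, the natural plan is to first prove well-typedness of those two operations separately as lemmas, and then to observe that the composition typechecks.

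The first lemma states: if $\Gam \vdash Q \rcolon N \red N' \colon \Del$ and $\Del \vdash M \colon A$, then $\Gam \vdash M[Q] \rcolon M[N] \red M[N'] \colon A$ (note that the codomain is $M[N']$, not the typo $M'[N']$ of the informal ``type''). This is proved by straightforward induction on $M$, viewed as a $\lam$-term modulo $\beta\eta$. The cases $()$ and $x_i$ are immediate, using $x_i[N] = N_i$. For $\constapp{c}{M_1, \ldots, M_p}$ and for pairing, application, and projections the induction hypothesis produces $M_i[Q]$ with the required type, and the corresponding rule in Figure~\ref{fig:reductions} assembles them into a reduction of the right type. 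The $\lam$-case requires extending the substitution $Q$ on the left by a fresh identity reduction $x \rcolon x \red x \colon B$ (available by the Remark), applying the IH to the body in the extended context, and then concluding with the $\lam$-rule.

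The second lemma states: if $\Gam \vdash N \colon \Del$ (a tuple of terms) and $\Del \vdash P \rcolon M \red M' \colon A$, then $\Gam \vdash P[N] \rcolon M[N] \red M'[N] \colon A$. This is proved by induction on the derivation of $P$. The base cases $()$, $x_i$ match left whiskering. For a rule instance $\ruleapp{r}{P_1, \ldots, P_p}$ with $r \in X(G \vdash M, N \colon A)$, the IH gives $P_i[N]$ of the right type, and the $\ruleapp{r}{-}$-rule packages these; here one uses associativity of term substitution to identify $M\subs{M_1, \ldots, M_p}[N]$ with $M\subs{M_1[N], \ldots, M_p[N]}$ modulo $\beta\eta$. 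Vertical composition $P_1 \vcomp{M''} P_2$ uses that the middle term $M''[N]$ acts as a legitimate composition point. The congruence cases for $\lam$, application, pairing, and projections are handled as in the first lemma, with a fresh-variable argument in the $\lam$-case.

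Given both lemmas, the proof of the proposition is immediate: from the second lemma we get $\Gam \vdash P[N] \rcolon M[N] \red M'[N] \colon A$, and from the first we get $\Gam \vdash M'[Q] \rcolon M'[N] \red M'[N'] \colon A$, whose vertical composition along the common term $M'[N]$ is exactly $P[Q]$ and has type $M[N] \red M'[N'] \colon A$, as required. The only real subtlety lies in the bookkeeping for the $\lam$-cases (choosing fresh variables and lifting tuples of reductions/terms with identities) and in the tacit use of the fact that whiskering respects $\beta\eta$-equivalence of the terms $M$, $N$; this last point is standard, since our left-whiskering clauses mirror the recursive structure of ordinary capture-avoiding substitution.
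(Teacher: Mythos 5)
The paper states this proposition without proof, treating it as routine, so there is nothing to compare against line by line; your argument supplies exactly the omitted content in the natural way. Your decomposition into two well-typedness lemmas --- left whiskering by induction on the term $M$, right whiskering by induction on the derivation of $P$, then vertical composition along $M'[N]$ --- is correct, and you are right that the displayed ``type'' of left whiskering in the paper contains a typo (the codomain should be $M[N']$, not $M'[N']$); the only points needing care are the ones you flag: extending $Q$ by identity reductions in the $\lam$-case and using associativity of term substitution in the $\ruleapp{r}{-}$ case.
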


\subsection{Permutation equivalence}
We now define \emph{permutation equivalence} on reductions, by the
equations in Figures~\ref{fig:req}, \ref{fig:req1} and~\ref{fig:req2}, in
Appendix~\ref{sec:req}. The \emph{congruence} rules in
Figure~\ref{fig:req} are bureaucratic: they just say that permutation
equivalence is a congruence. The \emph{category} rules make reductions
of a given type $\Gam \vdash A$ into a category.  In
Figure~\ref{fig:req1}, the \emph{beta and eta} rules mirror the
term-level beta and eta rules. Finally, the \emph{lifting} rules lift
composition of reductions towards toplevel.

So, $\L (X)$ has sorts $X_0$, operations $X_1$, and as reduction rules
in $\L (X) (G \vdash M, N \colon A)$ all reductions $G \vdash P
\rcolon M \red N \colon A$, modulo the equations.

This easily extends to:
\begin{prop}
  $\L$ is a functor $\StdSig \to \StdSig$.
\end{prop}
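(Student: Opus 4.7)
The plan is to define $\L$ on morphisms, verify well-definedness with respect to permutation equivalence, and then check functoriality. Given a morphism of 2-signatures $f = ((f_0, f_1), f_2) \colon X \to Y$, I would first note that $\L(X)$ and $\L(Y)$ share the same base 1-signature as $X$ and $Y$, so on sorts and operations $\L(f)$ can be set to $(f_0, f_1)$. The content is then the action on reduction rules. I would define $\L(f)$ on raw reductions by structural induction: on rule applications, $\ruleapp{r}{P_1,\ldots,P_n} \mapsto \ruleapp{f_2(r)}{\L(f)(P_1),\ldots,\L(f)(P_n)}$; on constant applications, $\constapp{c}{P_1,\ldots,P_n} \mapsto \constapp{f_1(c)}{\L(f)(P_1),\ldots,\L(f)(P_n)}$; and it is the identity up to recursion on variables, vertical composition, $\lam$-abstraction, application, pairing, and projections.

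Next, I would verify that $\L(f)$ preserves typing. Writing $\transl{-} \colon \Li(X) \to \Li(Y)$ for the underlying action of $f$ on $\lam$-terms (which exists since $\Li$ is a monad and $f$ is a morphism of 1-signatures), the key invariant is: if $\Gam \vdash P \rcolon M \red N \colon A$ in $\L(X)$, then $\Gam \vdash \L(f)(P) \rcolon \transl{M} \red \transl{N} \colon A$ in $\L(Y)$. The source/target clauses use that $f_2$ commutes with $\Li(f_1)\para{}$ by assumption, and that left whiskering commutes with $\transl{-}$. I would then show that $\L(f)$ commutes with left whiskering, right whiskering, and hence substitution, by a routine induction mirroring the definitions.

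The main technical step is showing that $\L(f)$ respects permutation equivalence, so that it descends to a well-defined map on equivalence classes. For this I would inspect each family of equations in Figures~\ref{fig:req}, \ref{fig:req1} and~\ref{fig:req2}: the congruence rules are immediate from the inductive definition; the category rules hold because $\L(f)$ commutes with vertical composition and preserves identity reductions; the beta/eta rules are preserved because $\L(f)$ acts structurally and commutes with substitution as established above; and the lifting rules hold for the same reason. Thus $\L(f)$ induces a well-defined function $\L(X)_2 \to \L(Y)_2$, and the required square against $\Li(-)\para{}$ commutes by the typing invariant, so $\L(f)$ is a morphism of 2-signatures.

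Finally, functoriality is then straightforward: $\L(\id_X)$ acts as the identity on each syntactic constructor by its definition, so it equals $\id_{\L(X)}$; and $\L(g \circ f)$ agrees with $\L(g) \circ \L(f)$ on every constructor by a one-line induction, using the fact that $g_i \circ f_i = (g \circ f)_i$ at the 1- and 2-cell levels. I expect the main obstacle, though entirely routine, to be the bookkeeping for preservation of permutation equivalence, since each equation must be checked, and the lifting and beta/eta cases rely crucially on the previously established compatibility of $\L(f)$ with the whiskering operations.
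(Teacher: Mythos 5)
Your proposal is correct and matches what the paper intends: the paper offers no proof at all (it simply says ``This easily extends to''), and the routine verification it is alluding to is exactly what you spell out --- define $\L(f)$ structurally on raw reductions via $f_1$ and $f_2$, check the typing invariant against $\Li(f)\para{}$, check that each equation of Figures~\ref{fig:req}--\ref{fig:req2} is preserved (using compatibility with whiskering/substitution for the beta/eta and lifting rules), and conclude functoriality. The only nitpick is that your typing invariant should also translate the context $\Gam$ and type $A$ along $\Lo(f_0)$, not leave them fixed, but this is harmless bookkeeping.
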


Now, consider $\L\L (X)$. We define a mapping $\mu_X \colon \L\L (X) \to
\L (X)$, by induction on reductions.
The typing rule for reduction
rules in $\L\L (X)$ specialises to:
\begin{mathpar}
    \inferrule{  (R \in \L(X) (G \vdash M, N \colon A)) \\
      \Gam \vdash P_1 \rcolon M_1 \red N_1 \colon G_1 \\ \ldots \\ 
      \Gam \vdash  P_n \rcolon M_n \red N_n \colon G_n  \\
    }{
      \Gam \vdash \ruleapp{R}{P_1, \ldots, P_n} \rcolon M \subs{M_1, \ldots, M_n}
      \red N \subs{N_1, \ldots, N_n} \colon A
    }\ \cdot
\end{mathpar}
We set $\mu(\ruleapp{R}{P_1, \ldots, P_n}) = R[\mu(P_1), \ldots, \mu(P_n)]$.
The other cases just propagate the substitution:
$$
\begin{array}{rcll}
  P \vcomp{} Q & \mapsto & \mu (P) \vcomp{} \mu (Q) \\
  x & \mapsto & x \\
  ()  & \mapsto & () \\
  \constapp{c}{P_1, \ldots, P_n} & \mapsto & \constapp{c}{\mu(P_1), \ldots, \mu(P_n)} \\
  \lam x \colon A. P  & \mapsto & \lam x \colon A. \mu (P) \\
  PQ  &\mapsto & \mu (P) \mu (Q) \\ 
  (P, Q)  &\mapsto & (\mu (P), \mu (Q)) \\ 
  \proj P  & \mapsto & \proj (\mu (P)) \\
  \projj P  & \mapsto & \projj (\mu (P)).
\end{array}
$$

\begin{lem}
  This defines a natural transformation $\mu \colon \L^2 \to \L$, which makes the 
  diagram
  \begin{center}
    \diag{%
      \L^3 \& \L^2 \\
      \L^2 \& \L %
    }{%
      (m-1-1) edge[labelu={\L\mu}] (m-1-2) %
      (m-1-1) edge[labell={\mu\L}] (m-2-1) %
      (m-1-2) edge[labelr={\mu}] (m-2-2) %
      (m-2-1) edge[labeld={\mu}] (m-2-2) %
    }
  \end{center}
  commute.
\end{lem}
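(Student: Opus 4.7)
The plan is to verify, in order: (1) that $\mu$ is well-defined on $\req$-equivalence classes, (2) that it is natural in $X$, and (3) that the displayed associativity square commutes. All three go by induction on the structure of reductions.

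For (1), every clause of $\mu$ besides the rule case is a homomorphism and is visibly compatible with each of the axioms in Figures~\ref{fig:req}, \ref{fig:req1} and~\ref{fig:req2}. The rule case $\ruleapp{R}{P_1,\ldots,P_n} \mapsto R\subs{\mu(P_1),\ldots,\mu(P_n)}$ involves an equivalence class $R$ of reductions of $\L(X)$, so compatibility reduces to the fact that substitution of reductions respects $\req$ in both the head $R$ and the arguments $P_i$. This is checked by going through the axioms: the pure congruence rules are immediate; the category rules follow from functoriality of substitution in $\vcomp{}$; and the beta/eta and lifting rules are stable under substitution by routine verification.

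For (2), naturality $\mu_Y \rond \L\L(f) = \L(f) \rond \mu_X$ is a straightforward induction on reductions in $\L\L(X)$, once one observes that $\L(f)$, which acts by renaming sorts, operations and rules, commutes with substitution on the nose.

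For (3), the induction is on a reduction $\mathcal{R}$ in $\L^3(X)$. All clauses but the rule case amount to propagating the equation through a homomorphic constructor. In the rule case, $\mathcal{R} = \ruleapp{\mathcal{S}}{\mathcal{P}_1,\ldots,\mathcal{P}_n}$ with $\mathcal{S}$ an equivalence class of reductions in $\L\L(X)$, and unfolding both paths gives
\begin{align*}
\mu_X(\L\mu_X(\mathcal{R})) &= \mu_X(\mathcal{S})\subs{\mu_X(\L\mu_X(\mathcal{P}_1)),\ldots,\mu_X(\L\mu_X(\mathcal{P}_n))}, \\
\mu_X(\mu_{\L X}(\mathcal{R})) &= \mu_X\bigl(\mathcal{S}\subs{\mu_{\L X}(\mathcal{P}_1),\ldots,\mu_{\L X}(\mathcal{P}_n)}\bigr).
\end{align*}
Applying the inductive hypothesis to each $\mathcal{P}_i$ and setting $Q_i = \mu_{\L X}(\mathcal{P}_i)$, the desired equality becomes the substitution lemma
\[\mu_X(\mathcal{S})\subs{\mu_X(Q_1),\ldots,\mu_X(Q_n)} \req \mu_X\bigl(\mathcal{S}\subs{Q_1,\ldots,Q_n}\bigr),\]
expressing that $\mu_X$ commutes with substitution.

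The main obstacle is this substitution lemma. It is proved by induction on $\mathcal{S}$, after separately establishing the corresponding commutations of $\mu_X$ with left and right whiskering. The content of those auxiliary statements is the associativity up to $\req$ of iterated substitution on reductions, which is what the lifting equations of Figure~\ref{fig:req2} together with the category rules are designed to deliver. Once the substitution lemma is in hand, the inductive step in (3) closes in one line, and the naturality square of (2) falls out as a special case of the same reasoning applied to a morphism of 2-signatures.
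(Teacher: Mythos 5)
The paper states this lemma without proof, so there is no argument of its own to compare against; the intended ingredients are visibly the lifting and category equations together with the ``crucial result'' the paper proves immediately afterwards, namely the interchange law $P[Q] \req (M[Q] \vcomp{M[N']} P[N'])$. Your decomposition into well-definedness, naturality, and the associativity square is the natural one, and you correctly locate all the real content in the commutation of $\mu_X$ with substitution, i.e.\ in the associativity up to $\req$ of iterated substitution of reductions.

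The one place where ``routine verification'' conceals the essential step is part (1), compatibility of $\mu$ with the \emph{lifting} axioms of $\L\L(X)$. There these axioms are instantiated at a rule $R$ of $\L(X)$, i.e.\ at an arbitrary $\req$-class of composite reductions over $X$; after applying $\mu$, the axiom $\ruleapp{R}{P \vcomp{} Q} \req M_1\subs{P} \vcomp{} \ruleapp{R}{Q}$ becomes $R[\mu P \vcomp{} \mu Q] \req M_1[\mu P] \vcomp{} R[\mu Q]$, which is \emph{not} an instance of any axiom of $\L(X)$: it is the lifting law generalized from atomic $r \in X_2$ to arbitrary reductions $R$. Unfolding the definition $R[S] = R[N_1] \vcomp{} M_2[S]$, proving it requires (i) functoriality of left whiskering in its reduction argument, $M_2[P \vcomp{} Q] \req M_2[P] \vcomp{} M_2[Q]$, by induction on $M_2$ using the lifting rules, and (ii) the interchange proposition $R[N_1] \vcomp{} M_2[P] \req M_1[P] \vcomp{} R[N_2]$, which is exactly the paper's crucial result. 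The same generalized lifting law is what the rule case of your substitution lemma consumes. Your outline is correct in structure, but you should state this dependency explicitly rather than folding it into ``stable under substitution by routine verification''; once it is in place the rest goes through as you describe.
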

Similarly, there is a natural transformation $\eta \colon \id \to \L$,
mapping each $r \in X (G \vdash M, N \colon A)$ to the reduction $G
\vdash \ruleapp{r}{x_1, \ldots, x_n} \rcolon M \red N \colon A$, and we have:
\begin{lem}
  The diagram
  \begin{center}
    \diag{%
      \L \& \L^2 \& \L \\
       \& \L %
    }{%
      (m-1-1) edge[labelu={\eta\L}] (m-1-2) %
      (m-1-1) edge[identity] (m-2-2) %
      (m-1-3) edge[identity] (m-2-2) %
      (m-1-3) edge[labelu={\L\eta}] (m-1-2) %
      (m-1-2) edge[labelr={\mu}] (m-2-2) %
    }
  \end{center}
  commutes.
\end{lem}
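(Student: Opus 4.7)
The plan is to verify both triangle identities
\[
\mu_X \circ \eta_{\L X} \req \id_{\L X}
\qquad\text{and}\qquad
\mu_X \circ \L\eta_X \req \id_{\L X}
\]
by structural induction on reductions, reducing each to a concrete permutation--equivalence lemma about substitution.

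For the left triangle, a rule $R \in \L X(G \vdash M, N \colon A)$ is first sent by $\eta_{\L X}$ to the reduction $\ruleapp{R}{x_1, \ldots, x_n}$ of $\L\L X$, where each $x_i$ is the identity reduction on the variable $x_i$. Applying $\mu_X$ then gives $R[\mu_X(x_1), \ldots, \mu_X(x_n)] = R[x_1, \ldots, x_n]$. So the triangle identity reduces to the \emph{identity substitution lemma}: $P[x_1, \ldots, x_n] \req P$ for every reduction $P$ of type $(x_1 \colon A_1, \ldots, x_n \colon A_n) \vdash \cdot$. Unfolding the definition, $P[x_1, \ldots, x_n] = P[\vec x] \vcomp{M'[\vec x]} M'[\vec x]$ where the first factor is right whiskering and the second is left whiskering; I would show by induction on $P$ (resp.\ on $M'$) that right whiskering by identity variables returns $P$ and that left whiskering by identity reductions on a term yields the identity reduction on that term. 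The right identity rule of the category equations then collapses the composite to $P$.

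For the right triangle, $\L\eta_X$ replaces every rule application $\ruleapp{r}{Q_1, \ldots, Q_n}$ occurring in a reduction $P$ of $\L X$ by $\ruleapp{\eta_X(r)}{\L\eta_X(Q_1), \ldots, \L\eta_X(Q_n)}$ in $\L\L X$, where the outer rule is now $\eta_X(r) = \ruleapp{r}{x_1, \ldots, x_n}$, and leaves every other constructor unchanged. By definition of $\mu_X$, this is sent to $\eta_X(r)[\mu_X(\L\eta_X(Q_1)), \ldots, \mu_X(\L\eta_X(Q_n))]$, which by the inductive hypothesis is $\eta_X(r)[Q_1, \ldots, Q_n] = \ruleapp{r}{x_1, \ldots, x_n}[Q_1, \ldots, Q_n]$. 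The lemma needed here is the dual \emph{substitution-after-eta-expansion} identity $\ruleapp{r}{\vec x}[Q_1, \ldots, Q_n] \req \ruleapp{r}{Q_1, \ldots, Q_n}$, again obtained by unfolding the two whiskerings and collapsing with an identity law. All other cases of the induction are immediate from the definitions of $\mu_X$ on the term constructors, since these mirror the corresponding constructors of reductions one-for-one.

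Naturality of $\eta$ is straightforward from its pointwise definition on rules: for $f \colon X \to Y$, both $\L f \circ \eta_X$ and $\eta_Y \circ f$ send $r \in X_2$ to $\ruleapp{f_2(r)}{x_1, \ldots, x_n}$. The main obstacle is the careful bookkeeping in the two whiskering/substitution lemmas, in particular making sure that the category rules (identity and associativity of $\vcomp{}$) and the lifting rules suffice to absorb every identity that appears; once these lemmas are in hand, the triangle identities fall out by a routine induction that parallels the one used for the associativity diagram in the preceding lemma.
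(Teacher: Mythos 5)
The paper states this lemma without proof, so there is no in-text argument to compare yours against; judged on its own, your proof is correct and isolates the right two auxiliary facts: the identity-substitution law $R\subs{x_1,\ldots,x_n} \req R$ for the left triangle, and $\ruleapp{r}{x_1,\ldots,x_n}\subs{Q_1,\ldots,Q_n} \req \ruleapp{r}{Q_1,\ldots,Q_n}$ for the right one, each proved by unfolding $P[Q] = P[N] \vcomp{M'[N]} M'[Q]$ into its two whiskerings. One point of precision on the second identity: after unfolding you obtain $\ruleapp{r}{N_1,\ldots,N_n} \vcomp{} M'\subs{Q_1,\ldots,Q_n}$, and when the $Q_i$ are not identities neither factor is an identity, so this composite is not collapsed by a category identity law alone; you need the second lifting rule of Figure~\ref{fig:req2} (instantiated with the $P_i$ taken to be the identity reductions $N_i$), together with the left identity law applied inside the congruence rule for $\ruleapp{r}{-}$ to rewrite $\ruleapp{r}{N \vcomp{} Q}$ as $\ruleapp{r}{Q}$. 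Since your closing paragraph explicitly flags that the lifting rules must be checked to absorb the identities that appear, this is a presentational slip rather than a gap. The remaining ingredients --- that right whiskering by the tuple of variables and left whiskering of a term by identity reductions both return identities, that the non-rule cases of the induction are homomorphic, and that $\eta$ is pointwise natural --- are routine as you say.
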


\begin{cor}
  $(\L, \mu, \eta)$ is a monad on $\StdSig$.
\end{cor}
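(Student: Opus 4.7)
The corollary is essentially an assembly of the two preceding lemmas together with the earlier proposition that $\L$ is a functor $\StdSig \to \StdSig$. My plan is simply to verify that every ingredient required for a monad has already been supplied.

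First, I would recall that a monad on $\StdSig$ consists of an endofunctor together with two natural transformations $\eta \colon \id \to \L$ and $\mu \colon \L^2 \to \L$ satisfying the associativity pentagon (which here degenerates to the square involving $\L\mu$ and $\mu\L$) and the two unit triangles. Functoriality of $\L$ is the proposition stated just after the definition of $\L(X)$. Naturality of $\mu$ and $\eta$ is asserted in the two lemmas preceding the corollary: the first lemma provides $\mu$ as a natural transformation $\L^2 \to \L$ and states the commutativity of the associativity square, while the second lemma provides $\eta \colon \id \to \L$ (implicitly natural, by its uniform definition $r \mapsto \ruleapp{r}{x_1,\ldots,x_n}$) and asserts commutativity of the unit triangle diagram.

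Thus the proof reduces to observing that the diagrams stated in the two lemmas are precisely the monad axioms
\[
\mu \circ \L\mu = \mu \circ \mu\L, \qquad \mu \circ \L\eta = \id_{\L} = \mu \circ \eta\L.
\]
No further calculation is needed: the main technical content (compatibility of substitution with permutation equivalence, well-definedness of $\mu$ on equivalence classes, and the commutativity of the two diagrams on reduction terms) has been discharged in the preceding lemmas by induction on reductions.

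There is no real obstacle at this stage; if any care is needed, it would be to make explicit that naturality of $\eta$ holds, i.e.\ that for every morphism $f \colon X \to Y$ of 2-signatures, $\eta_Y \circ f = \L(f) \circ \eta_X$, which is immediate from the definitions of $\L(f)$ (substitution of constants) and $\eta$ (reflective embedding of a rule $r$ as $\ruleapp{r}{x_1,\ldots,x_n}$). Once this is observed, the corollary follows by direct invocation of the two lemmas.
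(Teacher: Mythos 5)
Your proposal matches the paper exactly: the corollary is stated there without proof precisely because it is the immediate assembly of the functoriality proposition and the two preceding lemmas (associativity square and unit triangles), which is what you do. Your added remark on naturality of $\eta$ is a reasonable bit of extra care but introduces nothing beyond what the paper implicitly relies on.
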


A crucial result is:
\begin{prop}
  For all   $\Gam \vdash Q \rcolon N \red N' \colon \Del$ and
  $\Del \vdash P \rcolon M \red M' \colon A$, we have:
$$\Gam \vdash P[Q] \req (M[Q] \vcomp{M[N']} P[N']) \rcolon M[N] \red M'[N'] \colon A.$$
\end{prop}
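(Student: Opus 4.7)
The plan is to unfold the definition of $P[Q]$ as $P[N] \vcomp{M'[N]} M'[Q]$, reducing the proposition to the \emph{interchange equation}
\[
P[N] \vcomp{M'[N]} M'[Q] \req M[Q] \vcomp{M[N']} P[N'],
\]
which I will prove by induction on the derivation of $P$. This equation says that the two legitimate ways of interleaving the two whiskerings — left-then-right or right-then-left — agree up to permutation equivalence.

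For the identity-like leaves of the grammar (a variable $x_i$, the unit $()$, or a base constant $\constapp{c}{\ldots}$ in which no genuine rewriting takes place), the reduction $P$ is an identity reduction on some term, the left and right whiskerings coincide, and both sides of the interchange equation collapse to $M[Q]$ using the unit laws among the category rules in Figure~\ref{fig:req}. For the vertical composition $P = P_1 \vcomp{M''} P_2$, right whiskering distributes by its very definition; then the induction hypothesis applied first to $P_2$ (to swap $P_2[N] \vcomp{} M'[Q]$ into $M''[Q] \vcomp{} P_2[N']$) and next to $P_1$ yields the result, up to associativity.

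Each of the remaining constructor cases — reduction-rule application $\ruleapp{r}{P_1, \ldots, P_n}$, constant application $\constapp{c}{P_1, \ldots, P_n}$, $\lam$-abstraction, application, pairing, and the two projections — is handled uniformly. The corresponding \emph{lifting rule} in Figure~\ref{fig:req2} asserts that the constructor is functorial in each argument with respect to $\vcomp{}$. One argument at a time, the outer composition on either side of the interchange equation can thus be pushed inside the constructor, and the induction hypothesis applied to each $P_i$ closes the case.

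The main obstacle I expect is the rule-application case $P = \ruleapp{r}{P_1, \ldots, P_n}$ with $n > 1$: a single use of the lifting rule handles only one argument, so we must insert identity reductions in the other slots, apply the lifting rule $n$ times on each side, and then reassociate the resulting chains of compositions using the category rules so that the $n$ instances of the induction hypothesis can be slotted in. The same pattern controls the other multi-argument cases (application, pairing, $\constapp{c}{-}$). This bookkeeping is routine but unavoidable, and explains why the equation holds only up to permutation equivalence and not on the nose.
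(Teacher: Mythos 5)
Your proof follows essentially the same route as the paper's: induction on $P$, with the lifting rules of Figure~\ref{fig:req2} driving each constructor case and the rule-application case $\ruleapp{r}{P_1,\ldots,P_n}$ being the one where real work happens. The only remark worth making is that the relevant lifting rules are already stated for \emph{tuples} of reductions (their premises $\Del \vdash P \rcolon N_1 \red N_2 \colon \Gam$ range over tuples), so the argument-by-argument bookkeeping you anticipate as the ``main obstacle'' largely evaporates: the paper instead inserts identity reductions via the unit laws so as to apply the first and second lifting rules to the whole tuple at once.
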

\proof
  We proceed by induction on $P$. Most cases are
  bureaucratic. Consider for instance $P = \constapp{c}{P_1, \ldots, P_p}$.
  Then, by definition: $$P[Q] = (\constapp{c}{P_1[N], \ldots, P_p[N]} 
  \vcomp{\constapp{c}{M'_1[N], \ldots, M'_p[N]}} \constapp{c}{M'_1[Q], \ldots, M'_p[Q]}.$$ By
  the third lifting rule, this is permutation equivalent to
  $$\constapp{c}{P_1[N] \vcomp{M'_1[N]} M'_1[Q], \ldots, P_p[N] \vcomp{M'_p[N]} M'_p[Q]}.$$
  By $p$ applications of the induction hypothesis, we obtain
  $$\constapp{c}{M_1[Q] \vcomp{M_1[N']} P_1[N'], \ldots, M_p[Q] \vcomp{M_p[N']} P_p[N']},$$
      which by lifting again yields the desired result:
  $$\constapp{c}{M_1[Q], \ldots, M_p[Q]} \vcomp{\constapp{c}{M_1[N'], \ldots, M_p[N']}} 
  \constapp{c}{P_1[N'], \ldots, P_p[N']}.$$

  The case where something actually happens is $P = \ruleapp{r}{P_1,
    \ldots, P_p}$, with $r \in X (G \vdash M_0, M'_0 \colon A)$ and
  each $\Del \vdash P_i \rcolon M_i \red M'_i \colon G_i$. Then, the
  left-hand side is $$\ruleapp{r}{P_1[N], \ldots, P_p[N]}
  \vcomp{M'_0[M'_1, \ldots, M'_n][N]} M'_0[M'_1, \ldots, M'_p][Q].$$
  By lifting, omitting indices of vertical compositions and implicitly
  using the \emph{category} rules, we have
  $$\ruleapp{r}{P_1[N], \ldots, P_p[N]} \req \ruleapp{r}{M_1[N], \ldots, M_p[N]} \vcomp{} 
  M'_0[P_1[N], \ldots, P_p[N]].$$ Observing that $M'_0[M'_1, \ldots,
  M'_p][Q] = M'_0[M'_1[Q], \ldots, M'_p[Q]]$, the whole is
  permutation equivalent to
  $$
  \begin{array}{ll}
    \ruleapp{r}{M_1[N], \ldots, M_p[N]} \vcomp{} \\
  M'_0[P_1[N], \ldots, P_p[N]] \vcomp{} \\
  M'_0[M'_1[Q], \ldots, M'_p[Q]],
  \end{array}
  $$
  i.e., by lifting (inductively):
  $$
  \begin{array}{ll}
    \ruleapp{r}{M_1[N], \ldots, M_p[N]} \vcomp{} \\
  M'_0[(P_1[N] \vcomp{} M'_1[Q]), \ldots, (P_p[N] \vcomp{} M'_p[Q])].
  \end{array}
  $$
  By induction hypothesis, this is permutation equivalent to
  $$
  \begin{array}{ll}
    \ruleapp{r}{M_1[N], \ldots, M_p[N]} \vcomp{} \\
  M'_0[(M_1[Q] \vcomp{} P_1[N']), \ldots, (M_p[Q] \vcomp{} P_p[N'])],
  \end{array}
  $$
  i.e., by lifting again to
  $$
  \begin{array}{ll}
    \ruleapp{r}{M_1[N], \ldots, M_p[N]} \vcomp{} \\
  M'_0[M_1[Q], \ldots, M_p[Q]]  \vcomp{} \\
  M'_0[P_1[N'], \ldots, P_p[N']].
  \end{array}
  $$
  The second lifting rule then yields
  $$
  \begin{array}{ll}
    \ruleapp{r}{M_1[Q], \ldots, M_p[Q]} \vcomp{} \\
  M'_0[P_1[N'], \ldots, P_p[N']],
  \end{array}
  $$
and hence
  $$
  \begin{array}{ll}
  M_0[M_1[Q], \ldots, M_p[Q]]  \vcomp{} \\
    \ruleapp{r}{M_1[N'], \ldots, M_p[N']} \vcomp{} \\
  M'_0[P_1[N'], \ldots, P_p[N']],
  \end{array}
  $$
  so, by the second lifting rule again:
  $$
  \begin{array}{ll}
  M_0[M_1[Q], \ldots, M_p[Q]]  \vcomp{} \\
    \ruleapp{r}{P_1[N'], \ldots, P_p[N']},
  \end{array}
  $$
  i.e., the right-hand side. \qed

\section{Cartesian closed 2-categories}\label{sec:tcccat}
\subsection{Definition}
In a 2-category $\C$, a diagram
$A \xleftarrow{p} C \xrightarrow{q} B$
is a \emph{product diagram} iff for all object $D$, the induced functor
$$\C (D,C)
 \xrightarrow{\pairing{\C (D,p), \C (D,q)}}  \C (D, A) \times \C (D, B)$$
is an isomorphism of categories.  Because this family of functors is 2-natural in
$D$, the inverse functors will also be 2-natural.

Similarly, an object $1$ of $\C$ is \emph{terminal} iff for all $D$
the unique functor
$$\C (D, 1) \xrightarrow{!} 1$$
is an isomorphism (where the right-hand $1$ is the terminal category).

\begin{defi}
  A \emph{2-category with finite products}, or fp 2-category, is a
  2-category $\C$, equipped with a terminal object and a 2-functor
  $$\C \times \C \xrightarrow{\times} \C,$$
  plus, for all $A$ and $B$, a product diagram
  $$A \xleftarrow{p} A \times B \xrightarrow{q} B.$$
\end{defi}

In such an fp 2-category $\C$, given objects $A$ and $B$, an
\emph{exponential} for them is a pair of an object $B^A$ and a
morphism $\ev \colon A \times B^A \to B$, such that for all $D$, the
functor
\begin{center}
  \diag(.6,.4){ %
    \& \C (A, A) \times \C (D, B^A) \& \& \& %
    |(UR)| \C (A \times D, A \times B^A) \& \\
    |[text height=1em]| \C (D, B^A) \& \& \& \& \& |[text height=1em](DR)| \C (A \times D, B) }{ (m-2-1)
    edge[labelal={(\id_A!, \id)}] (m-1-2) %
    (m-1-2) edge[labelu={\times}] (UR) %
    (UR) edge[labelar={\C (A \times D, \ev)}] (DR) %
  } %
\end{center}
is an isomorphism.  As above, because this family of functors is
2-natural in $D$, the inverse functors will also be 2-natural.

\begin{defi}
  A \emph{cartesian closed 2-category}, or \tccc{}, is an fp
  2-category, equipped with a choice of exponentials for all pairs of
  objects. The category \tcccat{} has \tcccs{} as objects, and stricly
  structure-preserving functors between them as morphisms.
\end{defi}

We observe in particular that this implies preservation of projections
and evaluation morphisms.

\section{Main adjunction}\label{sec:main}

\subsection{Right adjoint}
Given a \tccc{} $\C$, define $\V (\C) = (\C_0, \C_1, \linebreak \C_2)$
as follows.  First, let as in Section~\ref{sec:ccc} $(\C_0, \C_1) =
\Vi (\C)$, and recall the canonical $\Lo$ and $\Li$-algebra structures
$h_0$ and $h_1$.  Let then the reduction rules in $\C_2 (G \vdash M, N
\colon A)$ be the 2-cells in $\C (h_0 (G), h_0 (A)) (h_1 (M), h_1
(N))$, abbreviated to $\C (G, A) (M, N)$ in the sequel.

This signature $\V\C$ has a canonical $\L$-algebra structure $h_2
\colon \L(\V\C) \to \V\C$, which we define by induction over reductions in
Figure~\ref{fig:xi}.  In the case for $\lam$, $\varphi$ denotes the structure
isomorphism $\C ((\prod \Gam) \times A, B) \iso \C (\prod \Gam, B^A)$.
 
\begin{figure}[p]
  \begin{framed}
    \centering
  \[
\begin{array}[c]{l}
  (G \vdash x_i \rcolon x_i \red x_i \colon G_i )  \mapsto  
  (\id_{\pi_i} \colon \pi_i \to \pi_i \colon \prod G \to G_i) 
  \\[.2em] \hline \\[-.7em]
  (G \vdash () \rcolon () \red () \colon 1)  \mapsto 
  (\id_! \colon ! \to ! \colon \prod G \to 1) 
  \\[.2em] \hline \\[-.7em]
  (\Gam \vdash \constapp{c}{P_1, \ldots, P_n} \rcolon \constapp{c}{M_1, \ldots, M_n} \red 
  \constapp{c}{N_1, \ldots, N_n} \colon A)  \mapsto  
  \\
  \hspace*{.1cm}
  \diag(1,1){
    \prod \Gam \& \& \prod G  \& A
  }{
    (m-1-1) edge[bend left,twou={\pairing{M_1, \ldots, M_n}}] (m-1-3) %
    (m-1-1) edge[bend right,twod={\pairing{N_1, \ldots, N_n}}] (m-1-3) %
    (m-1-3) edge[labelu={c}] (m-1-4) %
    (u) edge[cell={-0.03},labelr={P}] (d) %
  }  \hspace*{.1cm} (c \in \C_1 (G, A), P = \pairing{P_1, \ldots, P_n})
  \\[.2em] \hline \\[-.7em]
  (\Gam \vdash \ruleapp{r}{P_1, \ldots, P_n} \rcolon M\subs{M_1, \ldots, M_n} \red 
  N\subs{N_1, \ldots, N_n} \colon A)  \mapsto 
  \\
  \hspace*{.1cm}
  \diag|baseline=(m-1-1.base)|{\prod \Gam \& \prod G \& A}{%
    (m-1-1) edge[bend left,twoup={up}{\pairing{M_1, \ldots, M_n}}] (m-1-2) %
    (m-1-1) edge[bend right,twodown={down}{\pairing{N_1, \ldots, N_n}}] (m-1-2) %
    (m-1-2) edge[bend left,twou={M}] (m-1-3) %
    (m-1-2) edge[bend right,twod={N}] (m-1-3) %
    (u) edge[cell={-0.03},labelr={r}] (d) %
    (up) edge[cell={-0.03},labelr={P}] (down) %
  }   \hspace*{.3cm} (P = \pairing{P_1, \ldots, P_n})
  \\[.2em] \hline \\[-.7em]
  (G \vdash P \vcomp{M_2} Q \rcolon M_1 \red M_3 \colon A) \mapsto 
  \diag{%
    |[text width=2em]| \prod G \& |[text width=2em]| A %
  }{%
    (m-1-1) edge[bend left=40, %
    postaction={ %
      decorate, decoration={ markings, mark=at position .5 with \node (Mi){};}}, %
    labelu={M_1}] (m-1-2) %
    (m-1-1) edge[two={Mii}{}] (m-1-2) %
    (m-1-1) edge[bend right=40, %
    postaction={ %
      decorate, decoration={ markings, mark=at position .5 with \node (Miii){};}}, %
    labeld={M_3}] (m-1-2) %
    (Mi) edge[cell={0},labelr={P}] (Mii) %
    (Mii) edge[cell={0},labelr={Q}] (Miii) %
  } \\[.2em] \hline \\[-.7em]
  (\Gam \vdash \lam x \colon A . P \rcolon \lam x \colon A. M \red
  \lam x \colon A. N \colon B^A) \mapsto 
  \varphi (P \colon M \red N \colon (\prod \Gam) \times A  \to B) 
  \\[.2em] \hline \\[-.7em]
  (\Gam \vdash P Q \rcolon M N \red M' N' \colon B) \mapsto 
  \diag(1,1){
    \prod \Gam \& \& B^A \times A \& B 
  }{
    (m-1-1) edge[bend left,twou={\pairing{M,N}}] (m-1-3) %
    (m-1-1) edge[bend right,twod={\pairing{M',N'}}] (m-1-3) %
    (m-1-3) edge[labelu={\ev}] (m-1-4) %
    (u) edge[twocell,labelr={{\scriptstyle \pairing{P,Q}}}] (d) %
  } \\[.2em] \hline \\[-.7em]
  (\Gam \vdash (P, Q) \rcolon (M, N) \red (M', N') \colon A \times B) \mapsto 
  \diag{
    \prod \Gam \& A \times B
  }{
    (m-1-1) edge[bend left,twou={\pairing{M,N}}] (m-1-2) %
    (m-1-1) edge[bend right,twod={\pairing{M',N'}}] (m-1-2) %
    (u) edge[twocell,labelr={{\scriptstyle \pairing{P,Q}}}] (d) %
  } \\[.2em] \hline \\[-.7em]
  (\Gam \vdash \proj_{A,B} P \rcolon \proj_{A,B} M \red \proj_{A,B} N \colon A) \mapsto 
  \diag(1,1){
    \prod \Gam \& \& A \times B \& A
  }{
    (m-1-1) edge[bend left,twou={M}] (m-1-3) %
    (m-1-1) edge[bend right,twod={N}] (m-1-3) %
    (m-1-3) edge[labelu={\proj}] (m-1-4) %
    (u) edge[twocell,labelr={P}] (d) %
  } \\[.2em] \hline \\[-.7em]
  (\Gam \vdash \projj_{A,B} P \rcolon \projj_{A,B} M \red \projj_{A,B} N \colon B) \mapsto 
  \diag(1,1){
    \prod \Gam \& \& A \times B \& B
  }{
    (m-1-1) edge[bend left,twou={M}] (m-1-3) %
    (m-1-1) edge[bend right,twod={N}] (m-1-3) %
    (m-1-3) edge[labelu={\projj}] (m-1-4) %
    (u) edge[twocell,labelr={P}] (d) %
  } 
\end{array}
\]
\end{framed}
\caption{The $\L$-algebra structure on $\V(\C)$}
\label{fig:xi}
\end{figure}

In order for the definition to make sense as a morphism $\L
(\V\C) \to \V\C$, we have to check its compatibility with the
equations. We have first:
\begin{lem}\label{lem:substsem}
  For all $\Del \vdash Q \rcolon N \red N' \colon \Gam$ and $\Gam
  \vdash P \rcolon M \red M' \colon A$ in $\L (\V\C)$,
  \begin{center}
    \diag(1,3.5){%
      \Del \& A %
    }{%
      (m-1-1) edge[bend left=20,twou={M[N]}] (m-1-2) %
      (m-1-1) edge[bend right=20,twod={M'[N']}] (m-1-2) %
      (u) edge[cell={0},labelr={\scriptstyle h_2 (P[Q])}] (d) %
    } = %
    \diag(1,2.5){%
      \Del  \& \Gam \& A. %
    }{%
      (m-1-1) edge[bend left,twou={N}] (m-1-2) %
      (m-1-1) edge[bend right,twod={N'}] (m-1-2) %
      (m-1-2) edge[bend left,twoup={up}{M}] (m-1-3) %
      (m-1-2) edge[bend right,twodown={down}{M'}] (m-1-3) %
      (u) edge[cell={0},labelr={\scriptstyle  h_2 (Q)}] (d) %
      (up) edge[cell={0},labelr={\scriptstyle  h_2 (P)}] (down) %
    }
  \end{center}
\end{lem}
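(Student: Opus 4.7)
The plan is to reduce the full substitution claim to two auxiliary whiskering lemmas and then combine them via the middle-four interchange law of the 2-category $\C$.

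First I would establish the left whiskering lemma: for any $\Del \vdash Q \rcolon N \red N' \colon \Gam$ in $\L(\V\C)$ and any term $\Gam \vdash M \colon A$, the equation
\[h_2(M[Q]) \;=\; h_1(M) \cdot h_2(\pairing{Q_1, \ldots, Q_n})\]
holds, where $\cdot$ denotes horizontal composition (whiskering) of a morphism with a 2-cell in $\C$. This is proved by induction on the term $M$, case by case following the clauses defining left whiskering. The variable and unit cases are immediate from the clauses sending $x_i$ and $()$ to identity 2-cells. The constant, pair, projection, and application cases reduce to naturality of pairing and of $\ev$ with respect to 2-cells. The genuinely delicate case is $M = \lam x \colon B . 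M'$, which requires that the currying isomorphism $\varphi$ be 2-natural in its argument: the semantics of $(\lam x \colon B . M')[Q]$ is obtained by applying $\varphi$ to $h_2(M'[Q, x])$, and one uses the induction hypothesis together with 2-naturality of $\varphi$ to identify this with $\varphi(h_1(M')) \cdot h_2(Q)$.

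Symmetrically I would prove the right whiskering lemma: for $\Del \vdash N \colon \Gam$ and $\Gam \vdash P \rcolon M \red M' \colon A$,
\[h_2(P[N]) \;=\; h_2(P) \cdot h_1(\pairing{N_1, \ldots, N_n}).\]
The proof proceeds by induction on the reduction $P$, using the second clause group of the substitution definition. The rule application case $P = \ruleapp{r}{P_1, \ldots, P_p}$ reduces to the fact that the 2-cell for $\ruleapp{r}{\,\cdot\,}$ in Figure~\ref{fig:xi} is already expressed as a pasting with a tuple, so precomposing all components with $\pairing{N}$ is the same as whiskering the whole pasting with $\pairing{N}$. The vertical composition case $P = P_1 \vcomp{} P_2$ uses the compatibility of whiskering with vertical composition. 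The remaining cases propagate the induction hypothesis structurally, with the $\lam$ case again invoking 2-naturality of $\varphi$.

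Combining the two lemmas yields the result. By definition of substitution,
\[h_2(P[Q]) \;=\; h_2\bigl(P[N] \vcomp{M'[N]} M'[Q]\bigr) \;=\; h_2(P[N]) \vcomp{} h_2(M'[Q]),\]
since the clause for $\vcomp{}$ in Figure~\ref{fig:xi} says exactly that $h_2$ preserves vertical composition. Applying right whiskering to the left factor and left whiskering to the right factor gives
\[\bigl(h_2(P) \cdot h_1(\pairing{N})\bigr) \;\vcomp{}\; \bigl(h_1(M') \cdot h_2(\pairing{Q})\bigr),\]
and the middle-four interchange law in the 2-category $\C$ identifies this pasting with the horizontal composite displayed on the right-hand side of the lemma. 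The main obstacle is the $\lam$ case in both whiskering lemmas, where one must verify that the currying isomorphism $\varphi$ is compatible with whiskering on the parameter context; everything else is bureaucratic structural induction.
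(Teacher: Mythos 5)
Your proposal is correct and is essentially the fleshed-out version of the paper's own (one-line) proof, which just says ``by induction on $P$ and the axioms for cartesian closed 2-categories'': since $P[Q]$ is \emph{defined} as $P[N]\vcomp{M'[N]}M'[Q]$, any such induction must factor through exactly your two whiskering lemmas (induction on the term for left whiskering, on the reduction for right whiskering, with 2-naturality of $\varphi$ and of pairing carrying the $\lam$ and tuple cases) and then recombine via the interchange decomposition of horizontal composition. The only quibble is your justification of the variable case of the left-whiskering lemma: what is used there is that $x_i[Q]=Q_i$ together with 2-naturality of the projections ($\pi_i$ applied to the pairing returns the component), not the clause sending the reduction $x_i$ to an identity 2-cell; the case is trivial either way.
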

\begin{proof} By induction on $P$ and the axioms for cartesian closed
  2-categories.
\end{proof}

\begin{lem}
  Any two equated reductions are mapped to the same 2-cell in $\C$.
\end{lem}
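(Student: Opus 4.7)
The plan is to proceed by induction on the derivation of the equivalence $P \req Q$, covering each family of equations in turn. Since the equations form the smallest congruence satisfying the listed axioms, it suffices to verify: (i) that the semantic map $h_2$ respects the closure under congruence rules, and (ii) that each basic axiom (category, beta/eta, lifting) is validated by the 2-categorical structure of $\C$.

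\medskip

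For the congruence rules of Figure~\ref{fig:req}, one inspects the inductive definition of $h_2$ in Figure~\ref{fig:xi} and observes that every clause is a composite of (2-)functorial operations (pairing, product, exponential transpose $\varphi$, whiskering with constants, and vertical/horizontal composition), each of which respects equality of 2-cells. Hence replacing a subreduction by an equivalent one yields the same 2-cell. For the category rules, associativity and unitality of the vertical composition $\vcomp{}$ are mapped to associativity and unitality of vertical composition of 2-cells in $\C$, which hold by definition of a 2-category. For the beta and eta rules on reductions, the verification reduces to the analogous equalities for morphisms already used to define $h_1$, together with the fact that the transpose $\varphi$ is a \emph{natural} bijection (equivalently, $(B^A,\ev)$ is a true exponential in the 2-categorical sense), so that $\varphi$ extends coherently to 2-cells.

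\medskip

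The main obstacle is the family of lifting rules, which equate reductions obtained by pushing composition either inward into subterms or outward to the toplevel. Semantically, these rules correspond to the middle-four interchange law in $\C$ combined with 2-functoriality of $\times$, $(-)^{(-)}$, and of composition with fixed morphisms. Crucially, one of the lifting rules expresses precisely the equality of the two legitimate definitions of substitution, $P[Q] = P[N] \vcomp{} M'[Q]$ versus $M[Q] \vcomp{} P[N']$, as parallel reductions. On the semantic side, this is exactly the content of Lemma~\ref{lem:substsem}, which was established by induction on $P$ using the axioms of a cartesian closed 2-category. I would therefore handle the lifting rules by invoking Lemma~\ref{lem:substsem} together with interchange, rather than re-proving the underlying coherence from scratch.

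\medskip

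In summary: congruence and category rules are immediate from the definition of 2-category; beta/eta rules reduce to the 2-naturality of the structure isomorphism $\varphi$ and the projection and evaluation morphisms; and the lifting rules, the only substantive case, are taken care of by Lemma~\ref{lem:substsem} combined with interchange. Assembling these cases by induction gives the desired compatibility, so that $h_2$ descends from $\L(\V\C) \to \V\C$ to a well-defined morphism of signatures, which is the required $\L$-algebra structure on $\V\C$.
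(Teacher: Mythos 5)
Your proof follows essentially the same route as the paper's: induction on the derivation of $\req$, with the congruence and category rules discharged by the 2-category axioms, beta/eta by cartesian closedness, and the lifting rules by interchange together with functoriality of the structural isomorphisms. One caveat: you locate the use of Lemma~\ref{lem:substsem} in the lifting case, but it is really the beta rule that needs it --- the right-hand side $P\subs{Q/x}$ is a substituted reduction, and computing its image under $h_2$ is exactly what Lemma~\ref{lem:substsem} supplies; this is the case the paper singles out as ``less easy'' and spells out in full, whereas your sketch treats it as a routine transfer of the term-level equations. Relatedly, no lifting rule literally asserts the equality of the two candidate definitions of substitution: that equality is the separate Proposition derived from the lifting rules at the end of Section~\ref{sec:L}, and the lifting rules themselves are validated directly by interchange and 2-functoriality of whiskering, pairing and $\varphi$, with no appeal to Lemma~\ref{lem:substsem} needed.
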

\begin{proof}
  We proceed by induction on the proof of the considered equation.
  The congruence and category rules of Figures~\ref{fig:req} and
  \ref{fig:req1} hold because, in $\C$, vertical composition is
  associative and unital, and equality is a congruence.  The beta rule
  is less easy, so we spell it out.

  The left-hand reduction is interpreted in $\C$ as 
  \begin{center}
      \diag(1,1.2){
    \prod \Gam \& \& B^A \times A \& B 
  }{
    (m-1-1) edge[bend left,twou={\pairing{\varphi M,N}}] (m-1-3) %
    (m-1-1) edge[bend right,twod={\pairing{\varphi M',N'}}] (m-1-3) %
    (m-1-3) edge[labelu={\ev}] (m-1-4) %
    (u) edge[cell={0},labelr={\scriptstyle \pairing{\varphi P,Q}}] (d) %
  }
  \end{center}
which is equal to 
  \begin{center}
      \diag (1,1.2) {
    \prod \Gam \& \& \prod \Gam \times A \& \& B^A \times A \& B 
  }{
    (m-1-1) edge[bend left,twou={\pairing{\id,N}}] (m-1-3) %
    (m-1-1) edge[bend right,twod={\pairing{\id,N'}}] (m-1-3) %
    (u) edge[twocell,labelr={\scriptstyle \pairing{\id,Q}}] (d) %
    (m-1-3) edge[bend left,twoup={up}{\varphi M \times A}] (m-1-5) %
    (m-1-3) edge[bend right,twodown={down}{\varphi M' \times A}] (m-1-5) %
    (m-1-5) edge[labelu={\ev}] (m-1-6) %
    (up) edge[twocell,labelr={\scriptstyle \varphi P \times A}] (down) %
  }
  \end{center}
which is in turn equal (by cartesian closedness of $\C$) to:
  \begin{center}
      \diag (1,1.2) {
    \prod \Gam \& \& \prod \Gam \times A \& \& B 
  }{
    (m-1-1) edge[bend left,twou={\pairing{\id,N}}] (m-1-3) %
    (m-1-1) edge[bend right,twod={\pairing{\id,N'}}] (m-1-3) %
    (u) edge[twocell,labelr={\scriptstyle \pairing{\id,Q}}] (d) %
    (m-1-3) edge[bend left,twoup={up}{M}] (m-1-5) %
    (m-1-3) edge[bend right,twodown={down}{M'}] (m-1-5) %
    (up) edge[twocell,labelr={\scriptstyle P}] (down) %
  }
  \end{center}
  and hence to the right-hand side of the equation by
  Lemma~\ref{lem:substsem}.  The other beta and eta rules similarly
  hold by the properties of products, internal homs, and terminal object in $\C$.

  The lifting rules hold by (particular cases of) the interchange law
  in $\C$ and functoriality of the structural isomorphisms
  \begin{center}
    $\C (A \times B, C) \iso \C (B, C^A)$ \hfil and \hfil
    $\C (C, A \times B) \iso \C (C, A) \times \C (C, B)$,
\end{center}
which concludes the proof.
\end{proof}

This assignment extends to cartesian closed functors and we have:
\begin{prop}
  $\V$ is a functor $\tcccat \to \StdSig$.
\end{prop}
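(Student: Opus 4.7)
The plan is to define $\V$ on morphisms via the action of each 2-functor at dimensions 0, 1, and 2, and then verify that this assignment respects the signature morphism diagrams and is functorial. Given a strict cartesian closed 2-functor $F \colon \C \to \D$, I set $\V(F)_0 = F_0$ on sorts (action on objects), $\V(F)_1(f) = F(f)$ on 1-operations (action on 1-cells), and $\V(F)_2(\alpha) = F(\alpha)$ on reduction rules (action on 2-cells).

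The first step is to check that this assignment is well-typed as a morphism in $\StdSig$. The key observation is that, because $F$ preserves products, the terminal object, and exponentials strictly, a straightforward induction on the grammar of types yields $F \circ h_0^{\C} = h_0^{\D} \circ \Lo(F_0)$, which extends pointwise to contexts via $h_0(G) = \prod_i h_0(G_i)$. Consequently, a 1-operation of $\V(\C)$ of arity $G \vdash A$, which by definition is a morphism $f \colon h_0^{\C}(G) \to h_0^{\C}(A)$, is sent by $F$ to $F(f) \colon h_0^{\D}(\Lo(F_0) G) \to h_0^{\D}(\Lo(F_0) A)$, precisely an operation of $\V(\D)$ of arity $\Lo(F_0)(G) \vdash \Lo(F_0)(A)$. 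This gives commutativity of the 1-signature square. The analogous check for reduction rules is the same up to replacing $\Lo$ by $\Li$, using that $F$ sends a 2-cell between morphisms in $\C(h_0^{\C}(G), h_0^{\C}(A))$ to a 2-cell in $\D(h_0^{\D}(\Lo(F_0) G), h_0^{\D}(\Lo(F_0) A))$.

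Functoriality is then routine: $\V(\id_{\C}) = \id_{\V(\C)}$ because the identity 2-functor acts as the identity at each dimension, and $\V(G \circ F) = \V(G) \circ \V(F)$ because composition of 2-functors is pointwise at 0-, 1-, and 2-cells. The only point that might cause trouble is ensuring that strict preservation of structure is enough to make the typing line up on the nose, with no coherence isomorphisms intervening in the signature morphism squares; this is precisely what the strictness requirement on morphisms in $\tcccat$ guarantees, so no real difficulty arises.
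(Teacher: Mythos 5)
Your proof is correct and supplies exactly the routine verification that the paper leaves implicit (the paper merely asserts that the assignment extends to cartesian closed 2-functors). The one point worth spelling out slightly more is that at the level of reduction rules the required identity is $F \circ h_1^{\C} = h_1^{\D} \circ \Li(\V F)$ on $\lambda$-terms, which needs an induction over the term grammar (using strict preservation of pairing, projections, evaluation and currying), not just over the type grammar — but this is the same strictness argument you already invoke, so there is no real gap.
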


\subsection{Left adjoint}\label{subsec:left}
Given an $\L$-algebra $h \colon \L (X) \to X$, we now construct a
\tccc{} $\F (X, h)$. It has:
\begin{iteMize}{$\bullet$}
\item objects the types in $\Lo (X_0)$;
\item 1-cells $A \to B$ the terms in $\Li (X_0, X_1) (A, B)$;
\item 2-cells $M \red N \colon A \to B$ the reduction rules in $X_2 (M,
  N)$.
\end{iteMize}
We then must define the \tccc{} structure, and we start with the
2-category structure. Composition of 1-cells $A \xrightarrow{M} B
\xrightarrow{N} C$ is defined to be $A \xrightarrow{N\subs{M}} C$.
Identities are given by variables, as usual.
Vertical composition of 2-cells
\begin{center}
  \Diagpetit{\path
    (u) edge[cell={-.1},labelr={\alpha}] (d) %
    (up) edge[cell={-.1},labelr={\beta}] (down) %
    ; %
}{ A \&  \& B %
  }{ %
    (m-1-1) edge[bend left,twou={M_1}] (m-1-3) %
    (m-1-1) edge[twod={},twoup={up}{}] node[above,pos=.3]{$M_2$} (m-1-3) %
    (m-1-1) edge[bend right,twodown={down}{M_3}] (m-1-3) %
  }
\end{center}
is given by $h (\eta(\alpha) \vcomp{M_2} \eta(\beta))$.

Horizontal composition of 2-cells
\begin{equation}
  \diag{ A  \& B \& C %
  }{ %
    (m-1-1) edge[bend left,twou={M}] (m-1-2) %
    (m-1-1) edge[bend right,twod={M'}] (m-1-2) %
    (m-1-2) edge[bend left,twoup={up}{N}] (m-1-3) %
    (m-1-2) edge[bend right,twodown={down}{N'}] (m-1-3) %
    (u) edge[cell={0},labelr={\alpha}] (d) %
    (up) edge[cell={0},labelr={\beta}] (down) %
  }
  \label{eq:cells}
\end{equation}
is obtained as $h (\ruleapp{\beta}{\eta(\alpha)})$.

The identity at $A \xrightarrow{M} B$ is $h(M)$.

To show that this yields a 2-category structure, the only non obvious
point is the interchange law. We deal with it using the following
series of results.  First, consider the \emph{left whiskering}
  \begin{center}
    \diag{ A  \& B \& C %
    }{ %
      (m-1-1) edge[bend left,twou={M}] (m-1-2) %
      (m-1-1) edge[bend right,twod={M'}] (m-1-2) %
      (m-1-2) edge[labelu={N}] (m-1-3) %
      (u) edge[cell={0},labelr={\alpha}] (d) %
    }
  \end{center}
  of a 2-cell $\alpha$ by a 1-cell $N$, i.e., the composition $\id_N
  \rond \alpha = h (\ruleapp{(h (N))}{\eta(\alpha)})$. 
  \begin{lem} We have: $h (\ruleapp{(h (N))}{\eta(\alpha)}) = h (N\subs{\eta(\alpha)})$.
  \end{lem}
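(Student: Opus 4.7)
The plan is to derive this identity from the $\L$-algebra multiplication axiom $h \circ \mu_X = h \circ \L(h)$, applied to a single well-chosen reduction of reductions in $\L\L(X)$. The key observation is that the identity reduction $x \colon B \vdash N \rcolon N \red N \colon C$ is an element of $\L(X)_2$ and therefore constitutes a rule of the 2-signature $\L(X)$; we may thus form in $\L\L(X)_2$ the element
\[
P = \ruleapp{N}{\eta_{\L X}(\eta_X(\alpha))}.
\]

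I would then compute $\mu_X(P)$ and $\L(h)(P)$ side by side. For $\mu_X$, the clause $\mu(\ruleapp{R}{Q}) = R[\mu(Q)]$ combined with the monad unit law $\mu_X \circ \eta_{\L X} = \id$ gives $\mu_X(P) = N[\eta(\alpha)]$. Unfolding the general substitution $P'[Q'] = (P'[N_0] \vcomp{} M'[Q'])$ for the identity reduction $P' = N$ rewrites this as $(N\subs{M_0} \vcomp{} N\subs{\eta(\alpha)})$, where $M_0$ denotes the source of $\alpha$; since the left-hand summand is an identity, the \emph{category} rules of permutation equivalence collapse the composition to $N\subs{\eta(\alpha)}$. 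For $\L(h)$, the structural action on rule-applications yields $\L(h)(P) = \ruleapp{h(N)}{\L(h)(\eta_{\L X}(\eta_X(\alpha)))}$, and naturality of $\eta$ (giving $\L(h) \circ \eta_{\L X} = \eta_X \circ h$) followed by the unit axiom $h \circ \eta_X = \id$ simplifies the inner argument to $\eta(\alpha)$, so that $\L(h)(P) = \ruleapp{h(N)}{\eta(\alpha)}$.

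Applying $h$ and invoking the multiplication axiom then yields the chain
\[
h(N\subs{\eta(\alpha)}) = h(\mu_X(P)) = h(\L(h)(P)) = h(\ruleapp{h(N)}{\eta(\alpha)}),
\]
which is the desired identity.

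The main technical obstacle is the bookkeeping for $\mu_X(P)$: one must carefully unfold the mixed substitution $N[\eta(\alpha)]$ into a vertical composition of two whiskerings before the category rules can be invoked to absorb the identity factor, being mindful throughout of the paper's convention that a term $N$ is identified with its identity reduction. A direct structural induction on $N$ is a possible alternative but appears to be strictly more tedious, since it would require matching, case-by-case, each clause of left whiskering against the corresponding behaviour of $\L(h)$ on rule-applications.
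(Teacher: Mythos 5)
Your proposal is correct and follows exactly the paper's argument: the paper likewise considers the term $\ruleapp{N}{\eta(\eta(\alpha))}$ in $\L(\L(X))$ and compares its images under $h \rond \L(h)$ and $h \rond \mu$, which must agree by the $\L$-algebra axiom. Your write-up merely spells out the bookkeeping (the monad unit law, naturality of $\eta$, and the absorption of the identity whiskering via the category rules) that the paper leaves implicit.
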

  \begin{proof}
    Indeed, consider the term $\ruleapp{N}{\eta (\eta (\alpha))}$ in $\L(\L (X))$. Its
    images by $h \rond \L (h)$ and $h \rond \mu$ coincide, and are
    respectively $h (\ruleapp{(h (N))}{\eta(\alpha)})$, i.e., $\id_N \rond \alpha$,
    and $h (N\subs{\eta(\alpha)})$.
  \end{proof}
  Similarly, consider the \emph{right whiskering}
  \begin{center}
    \diag{ A  \& B \& C %
    }{ %
      (m-1-2) edge[bend left,twou={N}] (m-1-3) %
      (m-1-2) edge[bend right,twod={N'}] (m-1-3) %
      (m-1-1) edge[labelu={M}] (m-1-2) %
      (u) edge[cell={0},labelr={\gamma}] (d) %
    }
  \end{center}
  of a 2-cell $\gamma$ by a 1-cell $M$, i.e., the composition $\gamma
  \rond \id_N = h (\ruleapp{\gamma}{\eta(h (M))})$.
  \begin{lem}
    We have: $h (\ruleapp{\gamma}{\eta(h (M))}) = h (\ruleapp{\gamma}{M})$.
  \end{lem}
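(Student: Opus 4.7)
The plan is to mirror the proof of the left-whiskering lemma above: exhibit a term $T \in \L(\L(X))_2$ whose images under $h \rond \L(h)$ and $h \rond \mu$, which coincide by the $\L$-algebra axiom, turn out to be exactly the two sides of the desired equation.

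By symmetry with the previous case (where the roles of outer rule-use and argument were essentially inverted), I propose to take
\[
T \;=\; \ruleapp{\eta(\gamma)}{\eta(M)} \;\in\; \L(\L(X))_2,
\]
where $\eta(\gamma) = \ruleapp{\gamma}{x} \in \L(X)_2$ is the monad unit applied to $\gamma \in X_2$, and $\eta(M) \in \L(\L(X))_2$ is the monad unit applied to $M$ regarded as an identity reduction in $\L(X)_2$. Then, on the one hand, propagating $\L(h)$ through $T$ collapses $h(\eta(\gamma))$ to $\gamma$ by the algebra unit law and $\L(h)(\eta(M))$ to $\eta(h(M))$ by naturality of $\eta$, yielding $\L(h)(T) = \ruleapp{\gamma}{\eta(h(M))}$. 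On the other hand, unfolding $\mu$ on a rule-use and using the monad unit law $\mu\rond\eta = \id$ to simplify $\mu(\eta(M)) = M$, one obtains $\mu(T) = \eta(\gamma)\subs{M} = \ruleapp{\gamma}{M}$ (the final equality being a direct application of right whiskering to $\eta(\gamma) = \ruleapp{\gamma}{x}$). Applying $h$ and invoking $h \rond \L(h) = h \rond \mu$ then closes the argument.

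I do not anticipate any real obstacle: once $T$ is guessed by symmetry with the left-whiskering lemma, the verification is driven entirely by naturality of $\eta$ and the unit laws of the monad and of the algebra, with no induction on the shape of $M$ or $\gamma$ needed.
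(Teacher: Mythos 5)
Your proposal is correct and coincides with the paper's own argument: the paper also evaluates the term $\ruleapp{(\eta \gamma)}{\eta M}$ in $\L(\L(X))$ under $h \rond \L(h)$ and $h \rond \mu$ and concludes by the $\L$-algebra axiom. Your version merely spells out the intermediate simplifications (naturality of $\eta$, the unit laws) that the paper leaves implicit.
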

  \begin{proof}
    Consider $\ruleapp{(\eta \gamma)}{\eta M}$ in $\L (\L (X))$. Its images by
    $h \rond \L (h)$ and $h \rond \mu$ coincide, and are respectively
    $h (\ruleapp{\gamma}{\eta (h (M))})$ and $h (\ruleapp{\gamma}{M})$.
  \end{proof}
  Now, we prove that the two sensible ways of mimicking horizontal composition using 
  whiskering coincide with actual horizontal composition:
  \begin{lem}
    For any cells as in~\eqref{eq:cells}, 
    $$(\beta \rond \id_M) \vcomp{} (\id_{N'} \rond \alpha) =
    \beta \rond \alpha = (\id_N \rond \alpha) \vcomp{} (\beta \rond \id_{M'}).$$
  \end{lem}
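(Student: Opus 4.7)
The plan is to apply the substitution proposition of Section~\ref{sec:L} to the reductions $\eta(\beta)$ and $\eta(\alpha)$ in $\L(X)$. By the definition of reduction substitution one has $\eta(\beta)[\eta(\alpha)] = \ruleapp{\beta}{M} \vcomp{} N'[\eta(\alpha)]$; by the proposition it also equals $N[\eta(\alpha)] \vcomp{} \ruleapp{\beta}{M'}$; and by a lifting rule, $\ruleapp{\beta}{\eta(\alpha)}$ too equals $\ruleapp{\beta}{M} \vcomp{} N'[\eta(\alpha)]$. Because reductions in $\L(X)$ are taken modulo permutation equivalence, all four expressions coincide in $\L(X)$, and hence they share the same image under $h$.

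Next, each component should be matched to its 2-categorical counterpart. The two preceding whiskering lemmas yield $h(\ruleapp{\beta}{M}) = \beta \rond \id_M$, $h(\ruleapp{\beta}{M'}) = \beta \rond \id_{M'}$, $h(N[\eta(\alpha)]) = \id_N \rond \alpha$, and $h(N'[\eta(\alpha)]) = \id_{N'} \rond \alpha$, while the very definition of horizontal composition gives $h(\ruleapp{\beta}{\eta(\alpha)}) = \beta \rond \alpha$. It only remains to verify that $h$ sends reduction-level vertical composition to 2-cell vertical composition, i.e., that for any vertically composable $P_1, P_2$ in $\L(X)$, $h(P_1 \vcomp{} P_2) = h(\eta(h(P_1)) \vcomp{} \eta(h(P_2)))$.

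For this last step I would proceed as in the two preceding whiskering lemmas, by evaluating the reduction $\eta_{\L(X)}(P_1) \vcomp{} \eta_{\L(X)}(P_2)$ of $\L(\L(X))$ in two ways. Its image under $\mu$ is $P_1 \vcomp{} P_2$, using the vertical composition clause in the definition of $\mu$ together with the monad law $\mu \rond \eta_{\L(X)} = \id$; its image under $\L(h)$ is $\eta_X(h(P_1)) \vcomp{} \eta_X(h(P_2))$, by naturality of $\eta$ at $h$. The algebra law $h \rond \mu = h \rond \L(h)$ then delivers the identity. The hardest part is really only the initial bookkeeping: picking the right reduction in $\L(X)$ and reading off its two alternative decompositions from the substitution proposition and the lifting rule; from there, combining the whiskering lemmas with the compatibility of $h$ with vertical composition yields both interchange identities at once.
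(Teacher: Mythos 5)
Your proof is correct and takes essentially the same route as the paper: the paper likewise evaluates elements of the form $\eta(P_1)\vcomp{}\eta(P_2)$ in $\L(\L(X))$ under both $h\rond\L(h)$ and $h\rond\mu$ (with $P_1=\ruleapp{\beta}{M}$, $P_2=N'\subs{\eta(\alpha)}$, and symmetrically for the other equality), exactly your ``compatibility of $h$ with vertical composition'' step specialised to the case at hand. The only cosmetic difference is that you obtain the permutation equivalence $\ruleapp{\beta}{\eta(\alpha)}\req N\subs{\eta(\alpha)}\vcomp{}\ruleapp{\beta}{M'}$ from the substitution proposition rather than directly from the first lifting rule, which amounts to the same thing.
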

  \begin{proof}
    Consider first the reduction $\eta (\ruleapp{\beta}{M}) \vcomp{} \eta
    (N'\subs{\eta(\alpha)})$ in $\L (\L (X))$.  Taking $h \rond \L
    (h)$ and $h \rond \mu$ as above respectively yields
    \begin{iteMize}{$\bullet$}
    \item $h (\eta (h (\ruleapp{\beta}{M})) \vcomp{} \eta (h
      (N'\subs{\eta(\alpha)})))$, and
    \item $h (\ruleapp{\beta}{M} \vcomp{} N'\subs{\eta(\alpha)}) = h (\ruleapp{\beta}{\eta(\alpha)})$,
    \end{iteMize}
    hence the left-hand equality. Then consider $\eta (N\subs{\eta
      (\alpha)}) \vcomp{} \eta (\beta (M'))$. Evaluating as before
    yields the right-hand equality.
  \end{proof}
  Furthermore, consider any configuration like:
  \begin{center}
    \diagpetit{ A \&  \& B \& C. %
    }{ %
      (m-1-1) edge[bend left,twou={M}] (m-1-3) %
      (m-1-1) edge[twod={},twoup={up}{}] node[above,pos=.3]{$M'$}  (m-1-3) %
      (m-1-1) edge[bend right,twodown={down}{M''}] (m-1-3) %
      (m-1-3) edge[labelu={N}] (m-1-4) %
      (u) edge[cell={-.1},labelr={\alpha}] (d) %
      (up) edge[cell={-.1},labelr={\beta}] (down) %
    }
  \end{center}
  \begin{lem}
    We have $(\id_N \rond \alpha) \vcomp{} (\id_N \rond \beta) = 
    \id_N \rond (\alpha \vcomp{} \beta)$.
  \end{lem}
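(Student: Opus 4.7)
I would follow the pattern of the two preceding whiskering lemmas: exhibit well-chosen reductions in $\L (\L (X))$ and exploit the $\L$-algebra law $h \rond \L(h) = h \rond \mu$. A single reduction does not suffice here, so I plan to combine two applications of this algebra trick with one invocation of the lifting rules.

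First, I would unfold the definitions of whiskering (via the preceding lemma, $\id_N \rond \alpha = h(N\subs{\eta(\alpha)})$) and of vertical composition ($\alpha \vcomp{} \beta = h(\eta(\alpha) \vcomp{} \eta(\beta))$), so that the equation to prove takes the explicit form
$$h\bigl(\eta(h(N\subs{\eta(\alpha)})) \vcomp{} \eta(h(N\subs{\eta(\beta)}))\bigr) \ = \ h\bigl(N\subs{\eta(h(\eta(\alpha) \vcomp{} \eta(\beta)))}\bigr).$$

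For the left-hand side, I would consider the reduction $T_1 = \eta_{\L (X)}(N\subs{\eta(\alpha)}) \vcomp{} \eta_{\L (X)}(N\subs{\eta(\beta)})$ in $\L(\L(X))$. Its image under $\mu$ is $N\subs{\eta(\alpha)} \vcomp{} N\subs{\eta(\beta)}$, whereas by naturality of $\eta$ its image under $\L(h)$ is $\eta(h(N\subs{\eta(\alpha)})) \vcomp{} \eta(h(N\subs{\eta(\beta)}))$. The algebra law thus provides $h(N\subs{\eta(\alpha)} \vcomp{} N\subs{\eta(\beta)})$ as a second expression for the left-hand side. For the right-hand side, I would consider $T_2 = N\subs{\eta_{\L(X)}(\eta(\alpha) \vcomp{} \eta(\beta))}$ in $\L(\L(X))$. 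After verifying by an easy induction on $N$ that $\mu$ and $\L(h)$ both commute with the defined operation $-\subs{-}$, the algebra law applied to $T_2$ yields $h(N\subs{\eta(\alpha) \vcomp{} \eta(\beta)})$ as a second expression for the right-hand side.

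It would then remain to prove that $N\subs{\eta(\alpha) \vcomp{} \eta(\beta)} \req N\subs{\eta(\alpha)} \vcomp{} N\subs{\eta(\beta)}$ in $\L(X)$. This I would establish by induction on $N$, invoking at each step the lifting rule for the corresponding term constructor (in the style of the ``third lifting'' manipulations appearing in the preceding proposition). Since $h$ factors through permutation equivalence, the two second expressions above then coincide, which yields the desired equality.

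The main obstacle is not conceptual but bureaucratic: making sure every $\mu$, $\L(h)$, $\eta$, and $-\subs{-}$ is applied to something of the correct shape, that the intermediate sources and targets line up, and that the auxiliary commutation lemmas ($\mu$ and $\L(h)$ with $-\subs{-}$, and lifting of vertical composition through $-\subs{-}$) are carried out coherently across all constructors of terms.
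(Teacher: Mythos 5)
Your proposal is correct and follows essentially the same route as the paper's proof: the two auxiliary elements of $\L(\L(X))$ you exhibit are exactly the ones the paper uses, and the remaining identification $N\subs{\eta(\alpha)} \vcomp{} N\subs{\eta(\beta)} \req N\subs{\eta(\alpha) \vcomp{} \eta(\beta)}$ is what the paper dispatches with the words ``by lifting.'' Your explicit mention of the bookkeeping lemmas (commutation of $\mu$ and $\L(h)$ with whiskering, and the induction on $N$ for the lifting step) only makes explicit what the paper leaves implicit.
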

  \proof
    Consider $\eta (N\subs{\eta (\alpha)}) \vcomp{} \eta (N\subs{\eta (\beta)})$.
    Evaluating yields equality of
    \begin{iteMize}{$\bullet$}
    \item $h (\eta (h(N\subs{\eta (\alpha)})) \vcomp{} \eta (h(N\subs{\eta (\beta)})))$, i.e.,
      the left-hand side, and
    \item $h (N\subs{\eta (\alpha)} \vcomp{} N\subs{\eta (\beta)})$,
      i.e., $h (N\subs{\eta (\alpha) \vcomp{} \eta (\beta)})$ by
      lifting.
    \end{iteMize}
    But now consider $N\subs{\eta (\eta (\alpha) \vcomp{} \eta
      (\beta))}$. Evaluating yields equality of
    \begin{iteMize}{$\bullet$}
    \item $h (N\subs{\eta (\alpha) \vcomp{} \eta (\beta)})$, as above, and
    \item $h (N\subs{\eta (h (\eta (\alpha) \vcomp{} \eta
        (\beta)))})$, i.e., $h (N\subs{\eta (\alpha \vcomp{} \beta)})$
      (where $\alpha \vcomp{} \beta$ denotes vertical composition in
      our candidate 2-category), i.e., the right-hand side.   \qed
    \end{iteMize}

Finally, by a similar argument, we have:
  \begin{lem}
For any
  \begin{center}
    \diagpetit{ A \&  B \& \& C, %
    }{ %
      (m-1-2) edge[bend left,twou={N}] (m-1-4) %
      (m-1-2) edge[twod={},twoup={up}{}] node[above,pos=.3]{$N'$}  (m-1-4) %
      (m-1-2) edge[bend right,twodown={down}{N''}] (m-1-4) %
      (m-1-1) edge[labelu={M}] (m-1-2) %
      (u) edge[cell={-.1},labelr={\alpha}] (d) %
      (up) edge[cell={-.1},labelr={\beta}] (down) %
    }
  \end{center}
    we have $(\alpha \rond \id_M) \vcomp{} (\beta \rond \id_M) =
    (\alpha \vcomp{} \beta) \rond \id_M$.  \end{lem}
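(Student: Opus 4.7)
The proof closely mirrors that of the previous lemma, so I would use the same strategy: exhibit two terms in $\L\L(X)$ and compare their images under $h \rond \L(h)$ and $h \rond \mu$, which coincide by the $\L$-algebra axiom.

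First, I would take $T_1 = \eta(\ruleapp{\alpha}{M}) \vcomp{} \eta(\ruleapp{\beta}{M})$ in $\L\L(X)$. Its image under $h \rond \L(h)$ unfolds, by the definition of $\L(h)$ together with the defining clauses of right whiskering and vertical composition in $\F(X,h)$, to the left-hand side $(\alpha \rond \id_M) \vcomp{} (\beta \rond \id_M)$. Its image under $h \rond \mu$, using the monad law $\mu \rond \eta = \id$ and the clause $\mu(P \vcomp{} Q) = \mu(P) \vcomp{} \mu(Q)$, is $h(\ruleapp{\alpha}{M} \vcomp{} \ruleapp{\beta}{M})$. Hence the left-hand side equals this common value.

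Second, I would take $T_2 = \ruleapp{(\eta\alpha \vcomp{} \eta\beta)}{\eta M}$, whose outer rule is the $\L(X)$-reduction $\eta\alpha \vcomp{} \eta\beta$ obtained by vertically composing the two unit-embedded rules. Its image under $h \rond \L(h)$ reduces first to $h(\ruleapp{(\alpha \vcomp{} \beta)}{\eta M})$, and then by the second of the two right-whiskering lemmas above to the right-hand side $(\alpha \vcomp{} \beta) \rond \id_M$. Its image under $h \rond \mu$ equals $h((\eta\alpha \vcomp{} \eta\beta)[M])$; unfolding the defining clauses of right whiskering over $\vcomp{}$ and then over $\ruleapp{\cdot}{\cdot}$ brings this back to $h(\ruleapp{\alpha}{M} \vcomp{} \ruleapp{\beta}{M})$, matching the image of $T_1$ under $h \rond \mu$. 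Both sides of the equation therefore equal this common value.

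There is no genuine obstacle; the argument is formally dual to the previous lemma's proof, and the only real difficulty is careful bookkeeping across the three layers $X$, $\L(X)$, $\L\L(X)$ and keeping track of which instance of $\eta$, $\mu$, or the right-whiskering substitution is being invoked at each stage.
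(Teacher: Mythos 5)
Your proposal is correct and follows exactly the strategy the paper intends: the paper gives no explicit proof here, only the remark that it holds ``by a similar argument'' to the preceding left-whiskering lemma, and your two terms $\eta(\ruleapp{\alpha}{M}) \vcomp{} \eta(\ruleapp{\beta}{M})$ and $\ruleapp{(\eta\alpha \vcomp{} \eta\beta)}{\eta M}$ in $\L\L(X)$ are precisely the duals of the terms used there, with the images under $h \rond \L(h)$ and $h \rond \mu$ matching up as you describe. The bookkeeping checks out (indeed here the two $h\rond\mu$-images agree by the definition of $\mu$ and right whiskering alone, without even needing a lifting equation), so this is essentially the paper's own argument made explicit.
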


  \begin{lem}
    The interchange law holds, i.e., for all reduction rules as in
    \begin{center}
  \Diagpetit{\path
    (u) edge[cell={-.1},labelr={\alpha}] (d) %
    (up) edge[cell={-.1},labelr={\beta}] (down) %
    (ui) edge[cell={-.1},labelr={\gamma}] (di) %
    (upi) edge[cell={-.1},labelr={\theta}] (downi) %
    ; %
}{ A \&  \& B  \& \& C,%
  }{ %
    (m-1-1) edge[bend left,twou={M_1}] (m-1-3) %
    (m-1-1) edge[twod={},twoup={up}{}] node[above,pos=.3]{$M_2$} (m-1-3) %
    (m-1-1) edge[bend right,twodown={down}{M_3}] (m-1-3) %
    (m-1-3) edge[bend left,twoup={ui}{N_1}] (m-1-5) %
    (m-1-3) edge[twodown={di}{},twoup={upi}{}] node[above,pos=.3]{$N_2$} (m-1-5) %
    (m-1-3) edge[bend right,twodown={downi}{N_3}] (m-1-5) %
  }
\end{center}
we have $$(\gamma \vcomp{} \theta) \rond (\alpha \vcomp{} \beta)
 = (\gamma \rond \alpha) \vcomp{} (\theta \rond \beta).$$
  \end{lem}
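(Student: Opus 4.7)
The plan is to derive the interchange law as a direct algebraic consequence of the previous lemmas: the two decompositions of horizontal composition into whiskerings, and the fact that each whiskering distributes over vertical composition. No further reasoning about the monad $\L$ is needed.

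First I would expand the left-hand side by applying the decomposition
\[(\gamma \vcomp{} \theta) \rond (\alpha \vcomp{} \beta) = \bigl((\gamma \vcomp{} \theta) \rond \id_{M_1}\bigr) \vcomp{} \bigl(\id_{N_3} \rond (\alpha \vcomp{} \beta)\bigr),\]
which is one side of the two-way decomposition lemma applied to the composite cells $\gamma \vcomp \theta$ and $\alpha \vcomp \beta$. Then I would distribute whiskering over vertical composition on each factor, using the two distributivity lemmas to rewrite this as
\[(\gamma \rond \id_{M_1}) \vcomp{} (\theta \rond \id_{M_1}) \vcomp{} (\id_{N_3} \rond \alpha) \vcomp{} (\id_{N_3} \rond \beta).\]

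The key observation is that the middle factor $(\theta \rond \id_{M_1}) \vcomp (\id_{N_3} \rond \alpha)$ can be rewritten using the two-way decomposition lemma in reverse: both this and $(\id_{N_2} \rond \alpha) \vcomp (\theta \rond \id_{M_2})$ equal $\theta \rond \alpha$. Swapping them and regrouping yields
\[\bigl((\gamma \rond \id_{M_1}) \vcomp{} (\id_{N_2} \rond \alpha)\bigr) \vcomp{} \bigl((\theta \rond \id_{M_2}) \vcomp{} (\id_{N_3} \rond \beta)\bigr),\]
and a final application of the two-way decomposition (now recognising $\gamma \rond \alpha$ and $\theta \rond \beta$) gives $(\gamma \rond \alpha) \vcomp (\theta \rond \beta)$, as required.

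There is no real obstacle: all the combinatorial content of the interchange law has already been absorbed into the previous lemmas about whiskering. The only bookkeeping to be careful about is the associativity of $\vcomp{}$, which holds in our candidate 2-category because $\L$ inherits associativity of vertical composition from the \emph{category} equations, and which was used tacitly in the two-way decomposition step above.
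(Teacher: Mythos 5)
Your proof is correct and follows essentially the same route as the paper's: decompose the horizontal composite into whiskerings, distribute whiskering over vertical composition, swap the middle pair via the two-sided decomposition of $\theta \rond \alpha$, and regroup. The paper merely compresses the first two of your steps into one line and writes $\gamma \rond M_1$ for $\gamma \rond \id_{M_1}$.
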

\begin{proof}
By the previous results, we have
$$
\begin{array}{l}
  (\gamma \vcomp{} \theta) \rond (\alpha \vcomp{} \beta) \\
  {} = ((\gamma \vcomp{} \theta) \rond M_1) \vcomp{}
  (N_3 \rond (\alpha \vcomp{} \beta)) \\
  {} = (\gamma \rond M_1) \vcomp{} (\theta \rond M_1) \vcomp{}
  (N_3 \rond \alpha) \vcomp{} (N_3 \rond \beta) \\
  {} = (\gamma \rond M_1) \vcomp{} (N_2 \rond \alpha) \vcomp{}
  (\theta \rond M_2) \vcomp{} (N_3 \rond \beta) \\
  {} = (\gamma \rond \alpha) \vcomp{} (\theta \rond \beta). 
\end{array}
$$
\end{proof}

Now, let us show cartesian closedness. 
We have a bijection of hom-sets $\Li (X) (C \vdash A \times B)
\iso \Li (X) (C \vdash A) \times \Li (X) (C \vdash B)$, given by 
$$
\begin{array}{rcll}
\Li (X) (C \vdash A \times B)
& \to & \Li (X) (C \vdash A) \times \Li (X) (C \vdash B) \\
M & \mapsto & \proj{M}, \projj{M}
\end{array}
$$ 
and 
$$
\begin{array}{rcll}
\Li (X) (C \vdash A) \times \Li (X) (C \vdash B) & \to &
\Li (X) (C \vdash A \times B) \\
M, N & \mapsto & (M, N).
\end{array}
$$
These are mutually inverse thanks to the beta and eta rules for
products in the simply-typed $\lam$-calculus.

On 2-hom-sets, we have 
$$\hspace*{-.2cm}
\begin{array}{rcll}
\L (X) (C \vdash M, N \colon A \times B)
& \to &
\L (X) (C \vdash \proj{M}, \proj{N} \colon A)
\times 
\L (X) (C \vdash \projj{M}, \projj{N} \colon B) \\
P & \mapsto & \proj{P}, \projj{P}
\end{array}$$
and (omitting $C$)
$$\begin{array}{rcll}
\L (X) (M_1, N_1 \colon A)
\times 
\L (X) (M_2, N_2 \colon B) 
& \to &
\L (X) ((M_1, M_2), (N_1, N_2) \colon A \times B) \\
P_1, P_2 & \mapsto & (P_1, P_2),
\end{array}$$
which are mutually inverse thanks to the beta and eta rules for products
in Figure~\ref{fig:req1}. We use these to define the desired 
isomorphism  $(u, v)$
$$X_2(C \vdash M, N \colon A \times B) \iso
X_2 (C \vdash \proj{M}, \proj{N} \colon A) \times 
X_2 (C \vdash \projj{M}, \projj{N} \colon B),$$
as in the diagrams
\begin{center}
  \diag{%
    X_2 (M, N) \& %
    X_2 (\proj{M}, \proj{N}) \times X_2 (\projj{M}, \projj{N}) \\
    \L (X) (M, N) \& %
    \L (X) (\proj{M}, \proj{N}) \times \L (X) (\projj{M}, \projj{N})
  }{%
    (m-1-1) edge[labelu={u}] (m-1-2) %
    edge[labell={\eta}] (m-2-1) %
    (m-2-1) edge[labeld={\iso}] (m-2-2) %
    (m-2-2) edge[labelr={h \times h}] (m-1-2) %
  }
\end{center}
and 
\begin{center}
  \diag{%
    X_2 (\proj{M}, \proj{N}) \times X_2 (\projj{M}, \projj{N})\&  %
    X_2 (M, N) \\
    \L (X) (\proj{M}, \proj{N}) \times \L (X) (\projj{M}, \projj{N}) \& %
    \L (X) (M, N). %
  }{%
    (m-1-1) edge[labelu={v}] (m-1-2) %
    edge[labell={\eta \times \eta}] (m-2-1) %
    (m-2-1) edge[labeld={\iso}] (m-2-2) %
    (m-2-2) edge[labelr={h}] (m-1-2) %
  }
\end{center}

Starting from $r \in X_2 (M, N)$, we
obtain $$v (u (r)) = h (\eta (h (\pi (\eta (r)))), \eta (h (\pi' (\eta
(r))))).$$ But consider $(\eta (
\pi \eta (r)), \eta ( \pi' \eta (r)))$ in $\L (\L X)$; its
images by $h \rond \L h$ and $h \rond \mu$ are respectively:
\begin{iteMize}{$\bullet$}
\item $h (\eta (h (\pi (\eta r))), \eta (h (\pi' (\eta r))))$, and
\item $h (\pi \eta(r), \pi' \eta(r))$, i.e., $h (\eta (r))$, i.e., $r$, 
\end{iteMize}
which must be equal because $h$ is an $\L$-algebra, hence $v \rond u =
\id$.

Conversely, starting from $(r, s) \in X_2 (M_1, M_2) \times
X_2(N_1, N_2)$, we obtain the pair with components 
\begin{center}
  $h(\pi(\eta (h (\eta (r), \eta (s)))))$ \hfil and \hfil
  $h(\pi'(\eta (h (\eta (r), \eta (s)))))$.
\end{center}
Considering $\pi (\eta (\eta (r), \eta (s))) \in \L (\L (X))$,
its images by $h \rond \L (h)$ and $h \rond \mu$ are respectively:
\begin{iteMize}{$\bullet$}
\item $h(\pi (\eta (h(\eta (r), \eta (s)))))$, and
\item $h(\pi (\eta (r), \eta (s))) = h (\eta (r)) = r$.
\end{iteMize}
As above, they must be equal, and by symmetry the second component is
$s$, and we have proved $u \rond v = \id$. Similar reasoning for
the terminal object and internal homs leads to:
\begin{prop}
  This yields a \tccc{} structure on $\C$.
\end{prop}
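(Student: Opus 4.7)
The plan is to extend the classical 1-dimensional argument (the fact that the simply-typed $\lam$-calculus modulo $\beta\eta$ is the free cartesian closed category over its signature) to dimension~2 by mimicking, for every piece of \tccc{} structure, the $(u,v)$-style argument already carried out for products. I will also use crucially that $h$ is an $\L$-algebra, so that the diagrams $h \rond \L h = h \rond \mu$ collapse syntactic computations to equalities in $X_2$.

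First, I would check that products of objects are already defined: on 1-cells the structure is the one inherited from $\Li$, and it is a product in $\F(X,h)$ at the 1-dimensional level by the classical argument (beta/eta for products in the simply-typed $\lam$-calculus). The isomorphism $(u,v)$ established just above the statement upgrades this to an isomorphism of hom-categories $\F(X,h)(C, A \times B) \iso \F(X,h)(C,A) \times \F(X,h)(C,B)$: the bijection on 2-cells is what $(u,v)$ produces, and functoriality with respect to vertical composition follows from the lifting rules together with one more $h \rond \L h = h \rond \mu$ argument (applied to $(\eta r \vcomp{} \eta s)$ paired componentwise). Naturality in $C$ (i.e.\ compatibility with pre- and post-composition by 1- and 2-cells) follows similarly, by evaluating appropriate terms in $\L\L(X)$ and comparing the two ways to bring them down to $X$.

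Next, I would treat the terminal object $1 \in \Lo(X_0)$ exactly the same way. The uniqueness on 1-cells is the classical $\eta$-rule for $1$, and on 2-cells one uses the reduction $() \rcolon () \red ()$ together with the eta rule for $1$ in Figure~\ref{fig:req1} to show that $\F(X,h)(C,1)$ is isomorphic to the terminal category: the only 1-cell is $!$, and between $!$ and $!$ the only 2-cell is the identity. Here again, the image of any reduction into $1$ equals $h (\eta(!))$ by an $h \rond \L h = h \rond \mu$ argument, forcing uniqueness.

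The main obstacle will be exponentials. I would set $B^A$ to be the type $B^A$, with $\ev \colon B^A \times A \to B$ the term $\ev\, (\projj z, \proj z)$ (or the corresponding canonical term). The 1-dimensional universal property is the classical currying bijection in the $\lam$-calculus modulo $\beta\eta$. For the 2-dimensional one, I would define two candidate functors using the structural isomorphism $\varphi$ on reductions induced by the rules $\lam x.P$ and $PQ$, and exhibit inverse-pair diagrams entirely analogous to those displayed above for products, but now indexed by the evaluation and currying rules. Mutual inverseness on 2-cells will reduce, after the same $\L\L(X)$-trick, to the beta and eta rules for $\lam$ and application in Figure~\ref{fig:req1}. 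Naturality and 2-functoriality of the hom-category isomorphism then follow from the lifting rules and the interchange law proved just above. Once all three pieces (products, terminal, exponentials) have been verified to yield isomorphisms of hom-categories 2-naturally in the parameter object, the structure is a \tccc{} in the sense of Section~\ref{sec:tcccat}, concluding the proof.
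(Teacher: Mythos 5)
Your proposal matches the paper's own argument: the paper carries out the $(u,v)$ construction and the $h \rond \L h = h \rond \mu$ trick in detail for binary products and then simply states that ``similar reasoning for the terminal object and internal homs'' gives the result, which is exactly the extension you spell out (using the eta rule for $1$ and the beta/eta rules for $\lam$ and application from Figure~\ref{fig:req1}). You in fact supply slightly more than the paper does by explicitly noting that functoriality and 2-naturality of the hom-category isomorphisms must be checked via the lifting rules, but this is the same route, just with the implicit steps made visible.
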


This extends to morphisms of $\L$-algebras, so we have constructed a
functor $\F \colon \LAlg \to \tcccat$.

\subsection{Adjunction}
Consider any $\L$-algebra $(X, h)$. What does $(Y, k) = \V (\F (X,
h))$ look like?  Sorts in $Y_0$ are types in $\Lo (X_0)$. Operations
$Y_1(G \vdash A)$ are terms in $\Li (X_0, X_1) (\mu(\prod G) \vdash
\mu(A))$, where $\mu$ denotes the monad multiplication for
$\Lo$. Reduction rules in $Y_2 (G \vdash M, N \colon B)$ are
reduction rules in $X_2 (\mu(\prod G) \vdash M', N' \colon \mu(B)),$ where $M' =
M\subs{\pi_1 x / x_1, \ldots, \pi_n x / x_n}$ (and similarly for
$N'$).

Let $\unitadj_{X}$ map:
\begin{iteMize}{$\bullet$}
\item each sort $\iota \in X_0$ to the type $\iota \in \Lo (X_0)$
  (this is the monad unit for $\Lo$),
\item each operation $c \in X (G \vdash A)$ to the term
  $\constapp{c}{\pi_1 x, \ldots, \pi_n x}$, and
\item each reduction rule $r \in X_2(G \vdash M, N \colon A)$ to
  the reduction rule
  $$h(\ruleapp{r}{\pi_1 x, \ldots, \pi_n x}) \in X_2(\prod G \vdash M', N' \colon A).$$
  (Thanks to the fact that $\mu (\Lo (\eta) (A)) = A$.)
\end{iteMize}

\begin{thm}
  This $\unitadj$ is a natural transformation which is the unit of an adjunction
  \begin{center}
    \adj{\LAlg}{\tcccat{}.}{\F}{\V}
  \end{center}
\end{thm}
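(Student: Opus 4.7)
The plan is to establish the theorem in three steps: first verify that each $\unitadj_X$ is indeed an $\L$-algebra morphism $(X,h) \to \V\F(X,h)$, then check naturality in $X$, and finally establish the universal property of $\unitadj$.

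For the first step, writing $(Y,k) = \V\F(X,h)$, I would check by induction over reductions $P$ in $\L(X)$ that $\unitadj_X(h(P)) = k(\L(\unitadj_X)(P))$. On sorts and operations this is forced by the definitions of $h_0$ and $h_1$ on $\V\F(X,h)$ and of $\unitadj_X$; on reductions it reduces to checking, case by case, that the $\L$-algebra structure $h_2$ on $\V\F(X,h)$ described in Figure~\ref{fig:xi} agrees with how reductions are composed in $\F(X,h)$ via $h$. The key lemmas are the whiskering and interchange identities established in Section~\ref{subsec:left}, together with the compatibility lemma \textit{Lemma~\ref{lem:substsem}} for substitution. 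Naturality in $X$ is immediate from the inductive definition of $\unitadj_X$ on sorts, operations and reduction rules, since any $\L$-algebra morphism $g \colon (X,h) \to (X',h')$ commutes with all three layers of the algebra structure.

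For the universal property, given any $\L$-algebra morphism $f \colon (X,h) \to \V(\C)$ into an arbitrary $\tccc{}$ $\C$, I would construct the unique strict cartesian closed 2-functor $\bar{f} \colon \F(X,h) \to \C$ such that $\V(\bar{f}) \rond \unitadj_X = f$. Strict preservation of the cartesian closed structure forces $\bar{f}$ on objects by induction on types over $X_0$, extending $f_0$ using the terminal object, binary products and exponentials of $\C$. On 1-cells, the same strictness constraint together with the requirement $\V(\bar{f}) \rond \unitadj_X = f$ forces $\bar{f}$ to coincide with the standard denotational interpretation of the simply-typed $\lam$-calculus in $\C$ that uses $f_1$ on constants; the fact that this is well-defined on terms modulo $\beta\eta$ is classical. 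On 2-cells, any 2-cell of $\F(X,h)$ is a reduction rule $r \in X_2$, and we must set $\bar{f}(r) = f_2(r)$, reinterpreted via the $\L$-algebra structure of $\V(\C)$ as a 2-cell in $\C$.

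The main obstacle will be verifying that this $\bar{f}$ is a 2-functor, that is, preserves vertical and horizontal composition as defined in $\F(X,h)$ through $h$. Here one observes that both the composition operations in $\F(X,h)$ and the composition operations in $\C$ are encoded through their respective $\L$-algebra structures: in $\F(X,h)$ via $h$ on terms like $\eta(\alpha) \vcomp{} \eta(\beta)$ and $\ruleapp{\beta}{\eta(\alpha)}$, and in $\C$ via $h_2$ on the very same syntactic shapes (after application of $\bar{f}$). Since $f$ is by assumption an $\L$-algebra morphism to $\V(\C)$, the two constructions agree, and preservation of composition follows. Preservation of the cartesian closed structure on 2-cells is similarly forced by the interpretation of pairing, $\ev$ and $\varphi$ in Figure~\ref{fig:xi}. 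Uniqueness of $\bar{f}$ is immediate from strictness plus the data of $f$ on generators. Combining existence and uniqueness gives the bijection $\tcccat(\F(X,h), \C) \iso \LAlg((X,h), \V(\C))$ natural in $\C$, with unit $\unitadj$, completing the adjunction.
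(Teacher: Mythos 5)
Your proposal is correct and follows essentially the same route as the paper: the core is the construction of the unique strict cartesian closed 2-functor out of $\F(X,h)$, forced on objects and 1-cells by strictness and on 2-cells by $f_2$, with well-definedness and preservation of composition and of the cartesian closed structure all deduced from $f$ being a morphism of $\L$-algebras (exactly the mechanism the paper illustrates on binary pairings). The only difference is that you explicitly verify that $\unitadj_X$ is itself an $\L$-algebra morphism and that $\unitadj$ is natural, steps the paper leaves implicit.
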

\begin{proof}
  Consider any morphism $f \colon (X, h) \to \V (\C)$, and let $\X = \F (X,h)$, $(Y, k)
  = \V (\X)$, and $\V (\C) = (\C_0, \C_1, h_2 \colon \C_2 \to
  \C_1)$.  We now define a uniquely determined cartesian closed
  functor $f' \colon \X \to \C$ making the triangle
  \begin{center}
    \diag{%
      X \& \V (\X) \\
      \& \V (\C)
      }{%
        (m-1-1) edge[labelu={\unitadj_X}] (m-1-2) %
        edge[labelbl={f}] (m-2-2) %
        (m-1-2) edge[labelr={\V (f')}] (m-2-2) %
      }
  \end{center}
  commute.

  On objects, it is determined by induction: on sorts by $f_0$, and on
  type constructors by the requirement that $f'$ be cartesian
  closed. On morphisms, it is similarly determined by $f_1$ and $f'$
  being cartesian closed.  On 2-cells, define $f'$ to be $f_2 \colon
  X_2 (A \vdash M, N \colon B) \to \C (f' (A), f' (B)) (f' (M), f'
  (N))$, which is also the only possible choice from $f$.

  This indeed makes the above triangle commute, because any $r \in
  X_2(G \vdash M,N \colon A)$ is first mapped to $h (\ruleapp{r}{\pi_1
    x, \ldots, \pi_n x}) \in X_2(\prod G \vdash M',N' \colon A)$, and
  then to $$h (\ruleapp{r}{\pi_1 x, \ldots, \pi_n x})$$ in $\C$, which,
  because $f$ is a morphism of $\L$-algebras, is equal to
  $$h_2(\ruleapp{(f_2(r))}{\pi_1 x, \ldots, \pi_n x}),$$ i.e., to
  $f_2(r)$.

  It thus remains to show that $f'$ is cartesian closed, which
  follows by $f$ being a morphism of $\L$-algebras. For example, to
  show that binary pairings of reductions are preserved, consider $r
  \in X_2 (C \vdash M_1, M_2 \colon A)$ and $s \in X_2 (C \vdash N_1,
  N_2 \colon B)$. Their product in $\F (X)$ is obtained by considering
  the atomic reductions $x \colon C \vdash \ruleapp{r}{x} \rcolon M_1
  \red M_2 \colon A$ and $x \colon C \vdash \ruleapp{s}{x} \rcolon N_1
  \red N_2 \colon B$ and taking $h (\ruleapp{r}{x}, \ruleapp{s}{x})$,
  which is mapped by $f_2$ to $f_2 (h (\ruleapp{r}{x},
  \ruleapp{s}{x}))$.  But, because $f$ is a morphism of $\L$-algebras,
  this is the same as $h_2 (\ruleapp{(f_2 (r))}{x}, \ruleapp{(f_2
    (s))}{x})$, which is by definition (i.e., Figure~\ref{fig:xi}) the
  pairing $(f_2 (r) , f_2 (s))$ in $\C$.
\end{proof}

\subsection*{Acknowledgements} Thanks to Nicolas Tabareau for useful
feedback, and to Aurore Alcolei for reviving the subect after a few
years.

\bibliographystyle{plain}


\newpage
\appendix
\section{Equations on reductions}\label{sec:req}

\begin{figure}[hb]
  \begin{framed}
    \centering \ruleset{Congruence}{ \inferrule{ \Gam \vdash P \rcolon
        M \red N \colon A }{ \Gam \vdash P \req P \rcolon M \red N
        \colon A } %
      \and
      \inferrule{%
        \Gam \vdash P \req Q \rcolon M \red N \colon A }{ %
        \Gam \vdash Q \req P \rcolon M \red N \colon A %
      } %
      \and
      \inferrule{%
        \Gam \vdash P_1 \req P_2 \rcolon M \red N \colon A \\
        \Gam \vdash P_2 \req P_3 \rcolon M \red N \colon A }{ %
        \Gam \vdash P_1 \req P_3 \rcolon M \red N \colon A }\and
      \inferrule{ %
        \Gam \vdash P \req P' \rcolon M_1 \red M_2 \colon A \\
        \Gam \vdash Q \req Q' \rcolon M_2 \red M_3 \colon A %
      }{%
        \Gam \vdash (P \vcomp{M_2} Q) \req (P' \vcomp{M_2} Q')
        \rcolon %
        M_1 \red M_3 \colon A %
      }%
      \and %
      \inferrule{ %
        (r \in X (G \vdash M, N \colon A)) \\
        \Gam \vdash P_1 \req Q_1 \rcolon M_1 \red N_1 \colon G_1 \\ \ldots \\
        \Gam \vdash P_n \req Q_n \rcolon M_n \red N_n \colon G_n }{ %
        \Gam \vdash \ruleapp{r}{P_1, \ldots, P_n} \req \ruleapp{r}{Q_1, \ldots, Q_n}
        \rcolon %
        M \subs{M_1, \ldots, M_n} \red N \subs{N_1, \ldots, N_n}
        \colon A %
      }\and %
      \inferrule{ %
        (c \in X (G \vdash A)) \\
        \Gam \vdash P_1 \req Q_1 \rcolon M_1 \red N_1 \colon G_1 \\ \ldots \\
        \Gam \vdash P_n \req Q_n \rcolon M_n \red N_n \colon G_n \\
      }{ %
        \Gam \vdash \constapp{c}{P_1, \ldots, P_n} \req \constapp{c}{Q_1, \ldots, Q_n}
        \rcolon %
        \constapp{c}{M_1, \ldots, M_n} \red \constapp{c}{N_1, \ldots, N_n} \colon A %
      } \and
      \inferrule{ %
        \Gam, x \colon A \vdash P \req Q \rcolon M \red N \colon B %
      }{%
        \Gam \vdash (\lam x \colon A . P) \req (\lam x \colon A . Q)
        \rcolon %
        \lam x \colon A. M \red \lam x \colon A. N \colon B^A %
      }\and
      \inferrule{ %
        \Gam \vdash P \req P' \rcolon M \red M' \colon B^A \\ %
        \Gam \vdash Q \req Q' \rcolon N \red N' \colon A %
      }{ %
        \Gam \vdash (P Q) \req (P' Q') \rcolon M N \red M' N' \colon B
      } %
      \and %
      \inferrule{%
        \Gam \vdash P \req P' \rcolon M \red M' \colon A \\ %
        \Gam \vdash Q \req Q' \rcolon N \red N' \colon B %
      }{ %
        \Gam \vdash (P, Q) \req (P', Q') \rcolon (M, N) \red (M', N')
        \colon A \times B } %
      \and %
      \inferrule{ %
        \Gam \vdash P \req Q \rcolon M \red N \colon A \times B %
      }{ %
        \Gam \vdash (\proj_{A,B} P) \req (\proj_{A,B} Q) \rcolon %
        \proj_{A,B} M \red \proj_{A,B} N \colon A %
      } %
      \and
      \inferrule{ %
        \Gam \vdash P \req Q \rcolon M \red N \colon A \times B %
      }{ %
        \Gam \vdash (\projj_{A,B} P) \req (\projj_{A,B} Q) \rcolon %
        \projj_{A,B} M \red \projj_{A,B} N \colon A %
      } %
    }
  \end{framed}
  \caption{Equations on reductions (Congruence)}
  \label{fig:req}
\end{figure}
\newpage

\begin{figure}[ht]
  \begin{framed}
    \centering 
    \ruleset{Category}{ \inferrule{ %
        \Gam \vdash P_1 \rcolon M_1 \red M_2 \colon A \\ %
        \Gam \vdash P_2 \rcolon M_2 \red M_3 \colon A \\ %
        \Gam \vdash P_3 \rcolon M_3 \red M_4 \colon A \\ %
      }{ %
        \Gam \vdash (P_1 \vcomp{M_2} (P_2 \vcomp{M_3} P_3)) %
        \req %
        ((P_1 \vcomp{M_2} P_2) \vcomp{M_3} P_3) %
        \rcolon M_1 \red M_4 \colon A \\ %
      } \and
      \inferrule{ %
        \Gam \vdash P \rcolon M \red N \colon A %
      }{ %
        \Gam \vdash (P \vcomp{N} N) \req P \rcolon M \red N \colon A %
      } \and
      \inferrule{ %
        \Gam \vdash P \rcolon M \red N \colon A %
      }{ %
        \Gam \vdash (M \vcomp{M} P) \req P \rcolon M \red N \colon A %
      }}\hrulefill
      \ruleset{Beta and Eta}{ %
      \inferrule{ %
        \Gam, x \colon A \vdash P \rcolon M \red M' \colon B \\
        \Gam \vdash Q \rcolon N \red N' \colon A
      }{ %
        \Gam \vdash ((\lam x \colon A. P) Q) \req P\subs{Q/x} \rcolon
        (\lam x \colon A. M) N \red M'\subs{N'/x} \colon B %
      } \and
      \inferrule{%
        \Gam \vdash P \rcolon M \red N \colon B^A }{%
        \Gam \vdash P \req \lam x \colon A. (P x) \rcolon M \red N
        \colon B^A }~(x \notin \Gam) \and %
      \inferrule{%
        \Gam \vdash P \rcolon M_1 \red M_2 \colon A \\
        \Gam \vdash Q \rcolon N_1 \red N_2 \colon B \\
      }{%
        \Gam \vdash \proj (P, Q) \req P \rcolon \proj (M_1, N_1) \red
        M_2 \colon A } \and %
      \inferrule{%
        \Gam \vdash P \rcolon M_1 \red M_2 \colon A \\
        \Gam \vdash Q \rcolon N_1 \red N_2 \colon B \\
      }{%
        \Gam \vdash \projj (P, Q) \req Q \rcolon \projj (M_1, N_1)
        \red N_2 \colon A } \and %
      \inferrule{%
        \Gam \vdash P \rcolon (M_1, N_1) \red (M_2, N_2) \colon A
        \times B }{%
        \Gam \vdash P \req (\proj P, \projj P) \rcolon (M_1, N_1) \red
        (M_2, N_2) \colon A \times B } \and \inferrule{\Gam \vdash P
        \rcolon M \red N \colon 1}{ %
        \Gam \vdash P \req () \rcolon M \red N \colon 1} }
  \end{framed}
  \caption{Equations on reductions (Category and Beta-Eta)}
  \label{fig:req1}
\end{figure}
\newpage 

\begin{figure}[ht]
  \begin{framed}
    \centering 
    \ruleset{Lifting}{ %
      \inferrule{ %
        (r \in X (\Gam \vdash M_1,M_2 \colon A))  \\
        \Del \vdash P \rcolon N_1 \red N_2 \colon \Gam \\
        \Del \vdash Q \rcolon N_2 \red N_3 \colon \Gam
      }{ %
        \Gam \vdash \ruleapp{r}{P \vcomp{N_2} Q} \req M_1\subs{P}
        \vcomp{M_1\subs{N_2}} \ruleapp{r}{Q} 
        \rcolon 
        M_1\subs{N_1} \red M_2\subs{N_3} \colon A
          %
      }%
      \and
      \inferrule{ %
        (r \in X (\Gam \vdash M_1,M_2 \colon A))  \\
        \Del \vdash P \rcolon N_1 \red N_2 \colon \Gam \\
        \Del \vdash Q \rcolon N_2 \red N_3 \colon \Gam
      }{ %
        \Gam \vdash \ruleapp{r}{P \vcomp{N_2} Q} \req \ruleapp{r}{P}
        \vcomp{M_2\subs{N_2}} M_2 \subs{Q} 
        \rcolon 
        M_1\subs{N_1} \red M_2\subs{N_3} \colon A
          %
      }%
      \and
      \inferrule{ %
        \Gam \vdash P \rcolon M_1 \red M_2 \colon G \\
        \Gam \vdash Q \rcolon M_2 \red M_3 \colon G %
      }{ %
        \Gam \vdash (c (P \vcomp{M_2} Q)) \req (c (P) \vcomp{c (M_2)}
        c (Q)) \rcolon M_1 \red M_3 \colon A }~(c \in X (G \vdash A))
      \and
      \inferrule{
        \Gam, x \colon A \vdash P \rcolon M_1 \red M_2  \colon B \\
        \Gam, x \colon A \vdash Q \rcolon M_2 \red M_3 \colon B %
      }{ \mbox{\begin{minipage}[t]{0.8\linewidth} \raggedright $ \Gam
            \vdash (\lam x \colon A. (P \vcomp{M_2} Q)) \req ((\lam x
            \colon A . P) \vcomp{\lam x \colon A. M_2} (\lam x \colon
            A. Q))$
            \\
            \raggedleft $\rcolon \lam x \colon A. M_1 \red \lam x
            \colon A. M_3 \colon B^A$
          \end{minipage}} } \and %
      \inferrule{
        \Gam \vdash P \rcolon M_1 \red M_2 \colon B^A \\
        \Gam \vdash P' \rcolon M_2 \red M_3 \colon B^A \\
        \Gam \vdash Q \rcolon N_1 \red N_2 \colon A \\
        \Gam \vdash Q' \rcolon N_2 \red N_3 \colon A }{ \Gam \vdash
        ((P \vcomp{M_2} P') (Q \vcomp{N_2} Q')) \req ((P Q) \vcomp{M_2
          N_2} (P' Q')) \rcolon M_1 N_1 \red M_3 N_3 \colon B } \and %
      \inferrule{
        \Gam \vdash P \rcolon M_1 \red M_2 \colon A \\
        \Gam \vdash P' \rcolon M_2 \red M_3 \colon A \\
        \Gam \vdash Q \rcolon N_1 \red N_2 \colon B \\
        \Gam \vdash Q' \rcolon N_2 \red N_3 \colon B }{
        \mbox{\begin{minipage}[t]{0.8\linewidth} \raggedright $ \Gam
            \vdash ((P \vcomp{M_2} P'), (Q \vcomp{N_2} Q')) \req ((P,
            Q) \vcomp{(M_2, N_2)} (P', Q')) $
            \\
            \raggedleft ${} \rcolon (M_1, N_1) \red (M_3, N_3) \colon
            A \times B$
          \end{minipage}} } \and %
      \inferrule{
        \Gam \vdash P \rcolon M_1 \red M_2 \colon A \times B \\
        \Gam \vdash Q \rcolon M_2 \red M_3 \colon A \times B }{ \Gam
        \vdash (\proj_{A, B} (P \vcomp{M_2} Q)) \req (\proj_{A, B} P
        \vcomp{\proj_{A, B} M_2} \proj_{A, B} Q) \rcolon M_1 \red M_3
        \colon A } \and %
      \inferrule{
        \Gam \vdash P \rcolon M_1 \red M_2 \colon A \times B \\
        \Gam \vdash Q \rcolon M_2 \red M_3 \colon A \times B }{ \Gam
        \vdash (\projj_{A, B} (P \vcomp{M_2} Q)) \req (\projj_{A, B} P
        \vcomp{\projj_{A, B} M_2} \projj_{A, B} Q) \rcolon M_1 \red
        M_3 \colon B } 
    }
  \end{framed}
  \caption{Equations on reductions (Lifting)}
  \label{fig:req2}
\end{figure}

\end{document}